\title{Using Colors and Sketches to Count Subgraphs in a Streaming Graph}
\newcommand{\calX}{{\cal{X}}}
\newcommand{\calG}{{\cal{G}}}
\newcommand{\calZ}{{\cal{Z}}}
\newcommand{\calM}{{\cal{M}}}
\newcommand{\calC}{{\cal{C}}}
\newcommand{\calS}{{\cal{S}}}
\newcommand{\calP}{{\cal{P}}}
\newcommand{\calQ}{{\cal{Q}}}
\newcommand{\calV}{{\cal{V}}}
\newcommand{\calE}{{\cal{E}}}
\newcommand{\vect}[1]{\overrightarrow{#1}}
\newcommand{\auto}{{\rm auto}}
\newcommand{\tr}{{\rm tr}}
\newcommand{\Count}{{\rm Count}}
\theoremstyle{thmstyleone}
\newtheorem{lemma}{Lemma}
\newtheorem{theorem}{Theorem}
\begin{document}

\author{Shirin Handjani\thanks{shirin@ccr-lajolla.org}}

\author{Douglas Jungreis\thanks{jungreis@ccr-lajolla.org}}

\author{Mark Tiefenbruck\thanks{mgtiefe@ccr-lajolla.org}}

\affil{IDA Center for Communications Research, La Jolla}

\maketitle

\abstract{
Suppose we wish to estimate $\#H$, the number of copies of some small
graph~$H$ in a large streaming graph $G$.  There are many algorithms
for this task when $H$ is a triangle, but just a few that apply to
arbitrary~$H$.  Here we focus on one such algorithm, which was introduced
by Kane, Mehlhorn, Sauerwald, and Sun.  The storage and update time
per edge for their algorithm are both $O(m^k/(\#H)^2)$, where $m$ is
the number of edges in $G$, and $k$ is the number of edges in $H$.
Here, we propose three modifications to their algorithm that can
dramatically reduce both the storage and update time.  Suppose that
$H$ has no leaves and that $G$ has maximum degree $\leq m^{1/2 - \alpha}$,
where $\alpha > 0$.  Define $C = \min(m^{2\alpha},m^{1/3})$. Then in
our version of the algorithm, the update time per edge is $O(1)$, and
the storage is approximately reduced by a factor of $C^{2k-t-2}$,
where $t$ is the number of vertices in $H$; in particular, the
storage is $O(C^2 + m^k/(C^{2k-t-2} (\#H)^2))$.}
\section{Introduction}

Suppose that a large simple graph $G$ is presented as a stream of edge
insertions and deletions, and suppose that $H$ is a very small graph
(e.g., a small clique or cycle).  Our goal is to estimate $\#H$, the
number of copies of $H$ that appear in~$G$, where we are permitted
a single pass through the stream.  This problem has received a great
deal of attention, particularly in the case where $H$ is a triangle;
however, there are only a few known techniques that apply to arbitrary
$H$.  Here we focus on the technique that was developed in~\cite{KMSS,MMPS},
which we refer to as the [KMSS]-algorithm.

The [KMSS]-algorithm, which uses complex-valued linear sketches,
has many strengths: it applies to arbitrary $H$; it can be used in
distributed settings; it allows edge deletions; and it is extremely
efficient in a variety of situations, such as when $H$ is a star graph.
However, there are many situations where the algorithm is not practical.
Suppose $G$ has $m$ edges, and suppose $H$ has $k$ edges and $t$ vertices.
When the [KMSS]-algorithm produces a single estimate of $\#H$, that estimate
has variance $\Theta(m^k)$, so it is necessary to produce $O(m^k/(\#H)^2)$
estimates and average them.  The storage and update time per edge
are proportional to the number of estimates produced, and are
therefore both $O(m^k/(\#H)^2)$.


In this paper, we describe three modifications to the
[KMSS]-algorithm that greatly reduce both the storage
and update time per edge.  Suppose that $H$ is a connected
graph with no leaves.  Suppose also that the maximum degree of any
vertex in $G$ is $\Delta \leq m^{1/2 - \alpha}$, where $\alpha > 0$,
and define $C = \min(m^{1/3}, m^{2\alpha})$.  Then the storage
required by our algorithm is $O(C^2 + m^k/(C^{2k-t-2} (\#H)^2))$,
i.e., it has been reduced approximately by a factor of $C^{2k-t-2}$.
The update time per edge is $O(1)$.

The problem of counting copies of a small graph $H$ in a large graph
$G$ has been studied extensively.  It has many applications, as diverse
as community detection, information retrieval, and motifs in bioinformatics;
see for instance \cite{ACIKMS,CC,ELS,KMPT,T}.  Here we restrict to the case
where $G$ is given as a data stream, and our goal is merely to estimate
$\#H$, as opposed to computing $\#H$ exactly.  Most work on this problem
has addressed the case where $H$ is a triangle
\cite{AGM,BKS,BFKP,BFLMS,CGT,ERSU,FHKS,HS,JSP,JG,KP,KL,KHP,KMT,MVV,PT,PTTW}.
A few authors have addressed other specific subgraphs, such as
butterflies~\cite{SSTZ} and cycles~\cite{MMPS}.
We are only aware of a few algorithms that apply to arbitrary
subgraphs~\cite{AKK,BC,KMSS,KKP}.  Two of these, \cite{BC}
and~\cite{AKK}, require multiple passes through the stream,
which we do not allow here.  The third, \cite{KMSS}, presents
the [KMSS]-algorithm, which is the focus of this paper.  The
last, \cite{KKP}, presents a vertex-sampling algorithm which,
in some situations, is extremely efficient, requiring storage
$O(m/(\#H)^{1/\tau})$, where $\tau$ is the fractional vertex
cover number of $H$.  However, this bound requires a strong
assumption on $G$: it either requires that $G$ have bounded
degree, or it requires that the maximum degree in $G$ is
$(\#H)^{1/(2\tau)}$ and that some optimal fractional vertex
cover of $H$ can place non-zero degree on every vertex.

In order to explain our contribution to this problem, we first need
to briefly review the [KMSS]-algorithm.  Consider a fixed $H$.  Many
independent estimates are made for $\#H$, and they are then averaged.
To get a single estimate, the first step is to arbitrarily assign
directions to the edges of $H$.  We refer to the resulting digraph
as $\vec{H}$ and its edges as
$\vect{a_1 a_2}, \vect{a_3 a_4}, \dots, \vect{a_{2k-1} a_{2k}}$.
Also each edge $vw$ of $G$ is replaced by two directed edges $\vect{vw}$
and $\vect{wv}$.  We refer to the resulting directed version of $G$ as
$\vec{G}$.  For any graph or digraph $X$, we refer to its vertices
and edges as $\calV(X)$ and $\calE(X)$.  Now we define $k$~functions
$\calM_i \colon \calE(\vec{G}) \rightarrow {\bf C}$, one for each edge
$\vect{a_{2i-1}a_{2i}} \in \calE(\vec{H})$; each $\calM_i$ maps edges
of $\calE(\vec{G})$ to complex roots of unity.  These functions are
defined in such a way that they can ``recognize'' whether a $k$-tuple
of edges $\vec{T} = (\vect{v_1 v_2}, \dots, \vect{v_{2k-1} v_{2k}})$
in $\vec{G}$ forms a copy of $\vec{H}$ with each
$\vect{a_{2i-1}a_{2i}}$ mapping to $\vect{v_{2i-1}v_{2i}}$.  In
particular, if $\vec{T}$ does form such a copy, then the expected
value (over all permissible choices of the maps $\calM_i$) of
$\prod_{i=1}^k \calM_i(\vect{v_{2i-1}v_{2i}})$ is a non-zero constant;
otherwise, the expected value is zero.  Then, as the edges stream by,
the $k$ values $\calZ_i = \sum_{\vect{vw} \in \calE(\vec{G})}
\calM_i(\vect{vw})$ are computed.  Finally, when the stream ends, the
estimate of $\#H$ is given by $\prod_{i=1}^k \calZ_i$ multiplied by an
appropriate constant.

The key to the algorithm is how to define the functions $\calM_i$
so that they can recognize when $\vec{T}$ forms a copy of $\vec{H}$.
Each of these functions $\calM_i$ has two parts: one part is meant
to recognize when $\vec{T}$ forms a homomorphic image of $\vec{H}$,
and the other part is meant to recognize when the $t$ vertices of
this homomorphic image are distinct. In this paper, we do not use
the second part; we use a different method to ensure that the $t$
vertices are distinct.  We therefore omit the second part from
our description, keeping in mind that this description differs
somewhat from the one in~\cite{KMSS}.  For each vertex $b \in H$,
we define a hash function $\calX_b \colon \calV(G) \rightarrow {\bf C}$,
which maps vertices of $G$ to complex $\deg(b)^{\rm th}$ roots of unity,
where $\deg(b)$ is the degree of $b$ in $H$.  Then $\calM_i(\vect{vw})$
is defined to be $\calX_{a_{2i-1}}(v) \calX_{a_{2i}}(w)$.  It is
not difficult to see that $\prod_{i=1}^k \calM_i(\vect{v_{2i-1}v_{2i}})$
has expected value 1 if $\vec{T}$ forms a homomorphic image of $\vec{H}$
with each $\vect{a_{2i-1}a_{2i}}$ mapping to $\vect{v_{2i-1}v_{2i}}$;
otherwise, it has expected value 0.

We can now describe our contributions to this problem.
We present three modifications to the [KMSS]-algorithm, which
can be used separately or together to reduce the storage and
update time per edge.  First, we introduce a different method
for ensuring that we count only those homomorphic images of
$\vec{H}$ that have $t$ distinct vertices.  We do this by
assigning colors to the vertices of $G$.  Assuming there are
$C$ colors, we subdivide each sum $\calZ_i$ into $C^2$
different sums, one for each pair of colors.  For instance,
there might be a red-blue sum
$$\calZ_i^{\rm red, blue} = \sum_{\begin{subarray}{c}
                {v \ {\rm  red}}\\
                {w \ {\rm blue}}\end{subarray}} \calM_i(\vect{vw}) \, .$$
There might also be analogous blue-green sums and green-red sums,
and if we were counting triangles, then
$$\calZ_1^{\rm red, blue}\calZ_2^{\rm blue, green}
\calZ_3^{\rm green, red}$$ would give an estimate for the number of
triangles whose three vertices were respectively red, blue, and green.
This allows us to count only homomorphic images whose vertices all
have different colors, which in turn ensures that the vertices are
all distinct.  However, making sure the vertices are distinct is
not the primary reason we use colors.  The primary reason is that
it dramatically reduces the variance.

For our second modification, rather than defining one hash
function $\calX$ for each vertex of $H$, we define one for each
half-edge of $H$, with the condition that for any vertex $v$ of $G$
and $b$ of $H$, the product $\prod_h \calX_h(v) = 1$, where the
product is taken over all half-edges $h$ in $H$ that are incident to
$b$.  This too reduces the variance of each estimate.

For the third modification, rather than using hash functions
$\calX$ that map vertices to roots of unity, we use hash functions that
map vertices to diagonal $d$-by-$d$ matrices.  Each position along the
diagonal of the matrix more-or-less gives a separate estimate of $\#H$,
so in some sense, this is almost equivalent to making $d$ independent
estimates.  The difference is that, when an edge streams by, instead
of updating each $\calZ_i$ for $d$ different estimates, we only have
to update each $\calZ_i$ for one matrix of estimates.  This lets us
reduce the update time per edge approximately by a factor of $d$.

This paper is organized as follows.  In Section~\ref{sec:alg},
we describe our modified version of the [KMSS]-algorithm and prove
that it gives an unbiased estimate of $\#H$.  In Section~\ref{sec:var},
we bound the variance of our estimate.  In Section~\ref{sec:discussion},
we compare the storage and update time of our algorithm to that of
the original algorithm.

The authors would like to thank Kyle Hofmann, Anthony Gamst,
and Eric Price for many helpful conversations.
\section{Description of Algorithm}
\label{sec:alg}

In this section, we describe our algorithm and show that it gives
an unbiased estimate of $\#H$.  We only explain how to use the
algorithm to produce a single estimate of $\#H$, but in order
to get a more accurate estimate of $\#H$, we would compute
many such estimates and take their average.

Fix some small graph $H$.  We assume throughout the paper that $H$
is connected and has no leaves.  Let $t$ and $k$ respectively denote
the number of vertices and edges in $H$.  Arbitrarily assign directions
to the edges of $H$, and call the resulting directed graph $\vec{H}$.
We assume that the $t$ vertices of $H$ are labeled $1,\dots,t$, and
the $k$ edges are $\vect{a_1 a_2}, \dots, \vect{a_{2k-1} a_{2k}}$,
where each $a_i \in \{1,\dots, t\}$.  $\vec{H}$ has $2k$ half-edges,
which we call $h_1,\dots,h_{2k}$, where $h_{2i-1}$ and $h_{2i}$ are
respectively the two halves of $\vect{a_{2i-1} a_{2i}}$.  In particular,
each $h_j$ is incident to $a_j$. For $b \in \calV(\vec{H})$ define
$\Gamma(b) = \{i : a_i = b\}$.  In other words, $\Gamma(b)$ tells
which half-edges are incident to $b$.  Figure~\ref{fig:sampleH}
illustrates an example where $t=4$ and $k=5$.

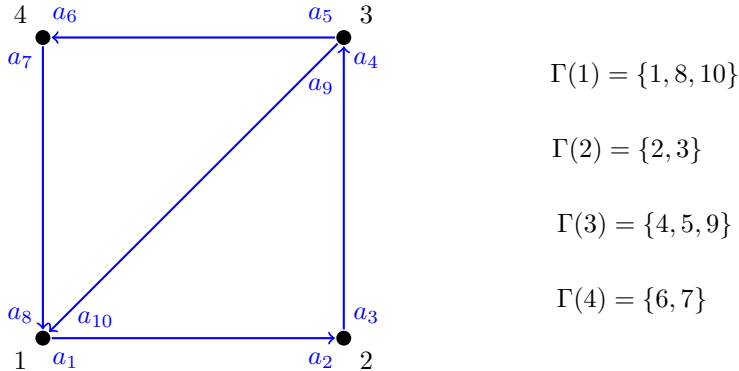
\begin{figure}
\begin{center}
\begin{tikzpicture}

\draw( 1.0, 1.0) node[circle, inner sep=2pt,fill=black,line width=1pt] (1) {};
\draw( 1.0, 5.0) node[circle, inner sep=2pt,fill=black,line width=1pt] (2) {};
\draw( 5.0, 5.0) node[circle, inner sep=2pt,fill=black,line width=1pt] (3) {};
\draw( 5.0, 1.0) node[circle, inner sep=2pt,fill=black,line width=1pt] (4) {};
\draw[thick,draw=blue,<-] (1) -- (2);
\draw[thick,draw=blue,<-] (2) -- (3);
\draw[thick,draw=blue,<-] (3) -- (4);
\draw[thick,draw=blue,<-] (4) -- (1);
\draw[thick,draw=blue,<-] (1) -- (3);
\draw( .7, .7) node{$1$};
\draw(5.3, .7) node{$2$};
\draw(5.3,5.3) node{$3$};
\draw( .7,5.3) node{$4$};
\draw(1.3, .7) node {\color{blue} $a_1$};
\draw(4.7, .7) node {\color{blue} $a_2$};
\draw(5.3,1.3) node {\color{blue} $a_3$};
\draw(5.3,4.7) node {\color{blue} $a_4$};
\draw(4.7,5.3) node {\color{blue} $a_5$};
\draw(1.3,5.3) node {\color{blue} $a_6$};
\draw( .7,4.7) node {\color{blue} $a_7$};
\draw( .7,1.3) node {\color{blue} $a_8$};
\draw(4.7,4.35) node {\color{blue} $a_9$};
\draw(1.7,1.25) node {\color{blue} $a_{10}$};

\draw(9.0, 4.5) node {$\Gamma(1) = \{1,8,10\}$};
\draw(8.77,3.5) node {$\Gamma(2) = \{2,3\}$};
\draw(9.0, 2.5) node {$\Gamma(3) = \{4,5,9\}$};
\draw(8.83,1.5) node {$\Gamma(4) = \{6,7\}$};

\end{tikzpicture}
\end{center}
\caption{Example of $\vec{H}$ and $\Gamma(1),\dots, \Gamma(4)$.}
\label{fig:sampleH}
\end{figure}

For each vertex $b \in \calV(\vec{H})$, select an arbitrary element
$i \in \Gamma(b)$, and call $h_i$ the {\em distinguished} half-edge
at $b$.  Observe that there are $2k$ half-edges in $\vec{H}$, of
which $t$ are distinguished and $2k-t$ are not.
\subsection{The Functions $\calX_i$}
\label{sec:calX}

The [KMSS]-algorithm uses hash functions $\calX$ that map vertices
of $G$ to complex roots of unity.  Here we define similar functions,
but there are two differences. First, instead of having one function
$\calX$ for each vertex of $H$, we have one for each half-edge of $H$.
Second, we allow the more general setting where the co-domain of each
$\calX$ is a group of diagonal matrices.

Let $\calG$ be any finite group of diagonal matrices with
the property that the average of the elements of $\calG$
(i.e., $\sum_{g \in \calG} g/|\calG|$) is the zero matrix.
Note that since $\calG$ consists of diagonal matrices, it
is abelian.  We use $d$ to denote the dimension of the matrices
in $\calG$.  We are primarily interested in two types of groups
$\calG$.  In the first type, $d = 1$, and the elements of $\calG$
are the complex $r^{\rm th}$ roots of unity, for some $r \geq 2$.
In that case, the matrices can be viewed as complex numbers and
are therefore equivalent to what's used in the [KMSS]-algorithm.
For the second type of $\calG$, $d \geq 2$.  Let $\omega = e^{2\pi i/d}$,
and let $M$ be the square diagonal matrix that has $1,\omega,\omega^2,
\dots,\omega^{d-1}$ along the diagonal.  Then $\calG$ is the group
generated by $M$ and $-I$ (where $I$ is the $d$-dimensional identity
matrix); thus $\calG$ has $2d$ elements: $\pm I, \pm M, \pm M^2, \dots,
\pm M^{d-1}$.  In this paper, we focus on those two types of $\calG$,
but we remark that there are other $\calG$ that satisfy the given
conditions; e.g., diagonal matrices whose diagonal entries
are all $\pm 1$.  The entire discussion in this section
applies to any such $\calG$; in particular, our algorithm
gives an unbiased estimate of $\#H$ for any such $\calG$.
However, the discussion of the variance in the next section
applies only to these two specific choices of $\calG$.

Fix any such group $\calG$, and for each $1 \leq i \leq 2k$,
define a hash function $\calX_i \colon \calV(G) \rightarrow \calG$.
If $h_i$ is a non-distinguished half-edge of $\vec{H}$, then
for each $v \in \calV(G)$, the value $\calX_i(v)$ is a random
element of $\calG$, and the functions $\calX_i$ for non-distinguished
$h_i$ are chosen independently and uniformly from a family of
$4k$-wise independent hash functions.  If $h_i$ is the distinguished
half-edge at $b$, then $\calX_i(v)$ is defined by
$$\calX_i(v) = \prod_{j \in \Gamma(b), j \neq i} \calX_j(v)^{-1} \, .$$
If $i$ is the only element of $\Gamma(b)$, then $\calX_i(v)=I$.
Observe that this definition of $\calX_i$ ensures that for any
vertex $b$ of $H$ and any $v \in \calV(G)$,
$$\prod_{j \in \Gamma(b)} \calX_j(v) = I\, .$$

\begin{lemma}
\label{lem:one_vertex}
Let $b \in \{1,\dots,t\}$ be any vertex of $\vec{H}$, and suppose
its degree is~$\delta$. Suppose $\Gamma(b) =  \{i_1,\dots,i_{\delta}\}$;
i.e., $h_{i_1},\dots,h_{i_{\delta}}$ are the  half-edges of $\vec{H}$
incident to $b$. Let $v_1,\dots,v_{\delta}$ be any $\delta$
not-necessarily-distinct vertices of $G$.  Then
$\calX_{i_1}(v_1) \cdots \calX_{i_{\delta}}(v_{\delta})$ is
equal to $I$ if $v_1 = \dots = v_{\delta}$, and otherwise it
is a uniformly random element of $\calG$.
\end{lemma}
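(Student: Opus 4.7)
\medskip
\noindent\textbf{Proof plan.}  The plan is to peel off the distinguished half-edge at $b$ and then exploit abelianness and $4k$-wise independence.  Without loss of generality take $i_1$ to be the distinguished index.  By definition of $\calX_{i_1}$ we have $\calX_{i_1}(v_1) = \prod_{j=2}^{\delta} \calX_{i_j}(v_1)^{-1}$, and since $\calG$ is abelian this lets us rewrite
$$
\calX_{i_1}(v_1)\,\calX_{i_2}(v_2)\cdots\calX_{i_\delta}(v_\delta)
\;=\; \prod_{j=2}^{\delta} \calX_{i_j}(v_1)^{-1}\calX_{i_j}(v_j).
$$
When $\delta=1$ the empty product gives $I$, consistent with the claim; assume $\delta\geq 2$ below.

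\medskip
\noindent
If $v_1=v_2=\cdots=v_\delta$, then each factor on the right is $\calX_{i_j}(v_1)^{-1}\calX_{i_j}(v_1) = I$, so the product equals $I$.  This handles the first case.  For the second case, suppose the $v_j$'s are not all equal; then some index $j\geq 2$ satisfies $v_j\neq v_1$, because if $v_1=v_j$ for every $j\geq 2$ then all $\delta$ vertices would agree.  Let $j^*$ be any such index.

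\medskip
\noindent
The key step is to argue that the $j^*$-th factor $\calX_{i_{j^*}}(v_1)^{-1}\calX_{i_{j^*}}(v_{j^*})$ is uniform on $\calG$ and independent of the remaining factors.  The distinct evaluations of hash functions appearing on the right-hand side are $\calX_{i_j}(v_1)$ and $\calX_{i_j}(v_j)$ for $j=2,\dots,\delta$; these are at most $2(\delta-1)\leq 4k-2$ distinct (half-edge,\,vertex) pairs, and every half-edge involved is non-distinguished, so by construction the $\calX_{i_j}$'s are mutually independent and each is $4k$-wise independent as a function of $v$.  Hence all these evaluations are mutually independent and uniform on $\calG$.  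In particular $\calX_{i_{j^*}}(v_{j^*})$ is uniform on $\calG$ and independent of every other random variable appearing.  Since $\calG$ is abelian, multiplication by an independent uniform element of $\calG$ sends any random variable to a uniform random variable, so the $j^*$-th factor is uniform on $\calG$ and independent of the product of the remaining factors.  Therefore the whole product is uniform on $\calG$.

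\medskip
\noindent
The only subtlety I anticipate is bookkeeping: making sure the $4k$-wise independence is actually strong enough to give genuine independence of the evaluations that appear, and confirming that the distinguished half-edge plays no role outside of the initial substitution (so that the $\calX_{i_j}(v_1)$ and $\calX_{i_j}(v_j)$ for $j\ge 2$ really are drawn from $4k$-wise independent families and are not themselves defined in terms of one another).  Both are immediate from the setup, so this is only a careful accounting rather than a real obstacle.
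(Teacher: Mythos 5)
Your proof is correct and follows essentially the same route as the paper's. You peel off the distinguished half-edge (you place it at index $i_1$, the paper places it at $i_\delta$ — purely cosmetic), rewrite the product as $\prod_{j\geq 2}\calX_{i_j}(v_1)^{-1}\calX_{i_j}(v_j)$, observe that equality of the $v_j$'s collapses every factor to $I$, and otherwise isolate one factor whose two evaluations are at distinct vertices and use independence (both across the distinct non-distinguished hash functions $\calX_{i_j}$ and, via the $\geq 4k$-wise independence, across distinct inputs to a single $\calX_{i_{j^*}}$) to conclude the full product is uniform on $\calG$; this is exactly the paper's argument with a marginally different bookkeeping of which single random variable drives the uniformity.
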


\begin{proof}
If $\delta=1$, then the result is clearly true, so assume $\delta>1$.
Assume without loss of generality that the distinguished half-edge at
$b$ is $h_{i_{\delta}}$.  Then by definition,
$$\calX_{i_{\delta}}(v_{\delta}) = \calX_{i_1}(v_{\delta})^{-1} \cdots
                  \calX_{i_{{\delta}-1}}(v_{\delta})^{-1} \, ,$$
so
\begin{equation}
\label{eq:one_vertex}
\prod_{j=1}^{\delta} \calX_{i_j}(v_j) = \prod_{j=1}^{{\delta}-1}
                 \calX_{i_j}(v_j) \calX_{i_j}(v_{\delta})^{-1} \, . 
\end{equation}
If $v_1=\dots=v_{\delta}$, then~\eqref{eq:one_vertex} is equal to $I$.
Now assume that some $v_j \neq v_{\delta}$.  Then for that~$j$,
$\calX_{i_j}(v_j) \calX_{i_j}(v_{\delta})^{-1}$
is the quotient of two independent uniformly random elements of $\calG$,
and is thus a uniformly random element of $\calG$.  Also, none of $h_{i_1},
\dots, h_{i_{{\delta}-1}}$ are distinguished, so
$$
\left(\calX_{i_1}(v_1) \calX_{i_1}(v_{\delta})^{-1}\right), \dots,
\left(\calX_{i_{\delta-1}}(v_{\delta-1})
      \calX_{i_{\delta-1}}(v_{\delta})^{-1}\right)
$$
are independent for all $v_j \ne v_\delta$, and the rest are $I$.
Since at least one is uniformly random, their product is as well.
\end{proof}
\subsection{The Functions $\calM_i$}

Let $\vec{G}$ be the directed graph obtained by replacing each edge $vw$
of $G$ by two directed edges, $\vect{vw}$ and $\vect{wv}$.  Each time an
edge of $G$ streams by, treat it as two directed edges of $\vec{G}$.
From now on, we use $m$ to refer to the number of edges in~$\vec{G}$.
Arguably, we should use $2m$; however, $m$ will be more convenient,
and the factor of 2 will be irrelevant to all of our conclusions,
which use $O()$ notation.

For each edge $\vect{a_{2i-1}a_{2i}}$ of $\vec{H}$, define a function
$\calM_i \colon \calE(\vec{G}) \rightarrow \calG$ by
$$\calM_i(\vect{vw}) = \calX_{2i-1}(v) \calX_{2i}(w) \, .$$
For any $k$-tuple
$\vec{T} = (\vect{v_1 v_2}, \dots, \vect{v_{2k-1} v_{2k}})$
of (not necessarily distinct) edges in $\calE(\vec{G})$, define
$$\calQ(\vec{T}) = \prod_{i=1}^k \calM_i(\vect{v_{2i-1} v_{2i}}) 
                 = \prod_{j=1}^{2k} \calX_j(v_j) \, ,$$
and for each vertex $b \in \calV(\vec{H})$, define
$$\calP_b(\vec{T}) = \prod_{j \in \Gamma(b)} \calX_j(v_j) \, .$$
Since every half-edge of $\vec{H}$ is in exactly one of the sets $\Gamma(b),$
we have
$$\calQ(\vec{T}) = \prod_{b \in \calV(\vec{H})} \calP_b(\vec{T}) \, .$$
The function $\calQ$ will in a sense ``test'' whether $\vec{T}$ forms
a copy of $\vec{H}$.

\begin{lemma}
\label{lem:k_edges}
Let $\vec{T} = (\vect{v_1 v_2}, \dots, \vect{v_{2k-1} v_{2k}})$ be any
$k$-tuple of edges of $\vec{G}$.  Suppose $f\colon\calE(\vec{H}) \rightarrow
\calE(\vec{G})$ sends $\vect{a_{2i-1} a_{2i}}$ to $\vect{v_{2i-1} v_{2i}}$
for each $i$.  If $f$ induces a homomorphism from $\vec{H}$ to $\vec{G}$,
then $\calQ(\vec{T}) = I$.  If $f$ does not induce such a homomorphism,
then $\calQ(\vec{T})$ is a uniformly random element of~$\calG$.
\end{lemma}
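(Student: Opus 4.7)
The plan is to exploit the decomposition $\calQ(\vec{T}) = \prod_{b \in \calV(\vec{H})} \calP_b(\vec{T})$ established just before the lemma, and apply Lemma~\ref{lem:one_vertex} to each factor. The argument splits naturally into the two cases named in the statement.

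First I would handle the homomorphism case. Unwinding the definition of a homomorphism $f\colon \vec{H}\to\vec{G}$: for each vertex $b \in \calV(\vec{H})$, all half-edges of $\vec{H}$ incident to $b$ must map to half-edges of $\vec{G}$ incident to a common vertex $f(b)$. In our notation, this says that for each $b$ the vertices $v_j$ with $j \in \Gamma(b)$ are all equal. Lemma~\ref{lem:one_vertex} then gives $\calP_b(\vec{T}) = I$ for every $b$, so the product $\calQ(\vec{T}) = I$.

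Next I would handle the non-homomorphism case, which is the more interesting one. If $f$ fails to be a homomorphism, then there exists some $b^\ast \in \calV(\vec{H})$ for which the $v_j$ with $j \in \Gamma(b^\ast)$ are not all equal, and Lemma~\ref{lem:one_vertex} gives that $\calP_{b^\ast}(\vec{T})$ is uniformly distributed on $\calG$. To promote ``$\calP_{b^\ast}$ is uniform'' to ``$\calQ$ is uniform,'' I would condition on the other factors $\{\calP_b(\vec{T}) : b \neq b^\ast\}$ and observe that left-multiplication by any fixed element of the (abelian) group $\calG$ preserves the uniform distribution, so the product is uniform as well.

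The one real thing to verify, and the step most likely to trip one up, is the independence assertion: that $\calP_{b^\ast}(\vec{T})$ remains uniformly distributed after conditioning on $\{\calP_b(\vec{T}) : b \neq b^\ast\}$. This rests on the crucial partition property that $\{\Gamma(b) : b \in \calV(\vec{H})\}$ partitions $\{1,\dots,2k\}$, so $\calP_b(\vec{T})$ depends only on the hash functions $\calX_j$ with $j \in \Gamma(b)$. The non-distinguished $\calX_j$ are drawn independently by construction, and the distinguished $\calX_j$ at vertex $b$ is by definition a function only of the non-distinguished $\calX_{j'}$ with $j' \in \Gamma(b)$; consequently the families $\{\calX_j : j \in \Gamma(b^\ast)\}$ and $\{\calX_j : j \notin \Gamma(b^\ast)\}$ are independent, which is exactly what is needed. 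With this independence in hand, the conditional uniformity of $\calP_{b^\ast}(\vec{T})$ follows from the unconditional uniformity given by Lemma~\ref{lem:one_vertex}, and the proof concludes.
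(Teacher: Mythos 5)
Your proposal is correct and takes essentially the same route as the paper: decompose $\calQ(\vec{T})$ as $\prod_b \calP_b(\vec{T})$, apply Lemma~\ref{lem:one_vertex} to each factor, and in the non-homomorphism case observe that the factors are independent across vertices of $H$ (since the $\Gamma(b)$ partition $\{1,\dots,2k\}$) so that one uniform factor makes the whole product uniform. You spell out the independence argument a bit more explicitly than the paper does, but the underlying reasoning is identical.
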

\begin{proof}
Suppose $f$ induces a homomorphism from $\vec{H}$ to $\vec{G}$.
Let $b \in \{1,\dots,t\}$ be any vertex of $H$, and suppose the
homomorphism sends $b$ to $w$.  Suppose $\Gamma(b) =  \{j_1,\dots,j_d\}$;
i.e., $h_{j_1}, \dots, h_{j_d}$ are the half-edges of $\vec{H}$ that are
incident to $b$.  Then $v_{j_1},\dots, v_{j_d}$ must all be equal to $w$.
By Lemma~\ref{lem:one_vertex},
$\calX_{j_1}(v_{j_1}) \cdots \calX_{j_d}(v_{j_d}) =~I$.
Equivalently, $\calP_b(\vec{T}) = I$.  This is true for every
$b \in \calV(\vec{H})$, so
$$\calQ(\vec{T}) = \prod_{b \in \calV(\vec{H})} \calP_b(\vec{T}) = I.$$

Now suppose $f$ does not induce such a homomorphism.  Then there must
be some vertex $b$ of $H$ such that, if $\Gamma(b) =  \{j_1,\dots,j_d\}$,
then the vertices $v_{j_1},\dots, v_{j_d}$ are not all equal.  Thus by
Lemma~\ref{lem:one_vertex}, $\calX_{j_1}(v_{j_1}) \cdots \calX_{j_d}(v_{j_d})$
is a uniformly random element of~$\calG$, i.e., $\calP_b(\vec{T})$ is a
uniformly random element.  $\calP_b(\vec{T})$ is independent of
$\calP_c(\vec{T})$ for any other $c \in \calV(\vec{H})$, so
$\prod_{c \in \calV(\vec{H})} \calP_c(\vec{T})$ is also a uniformly
random element; i.e., $\calQ(\vec{T})$ is a uniformly random element.
\end{proof}
\subsection{Coloring Vertices}

Fix some number of colors $C \geq t$.  For the purposes of bounding
the variance, we will later assume that the maximum degree of any
vertex of $G$ is $\leq m^{1/2-\alpha}$ and then set $C = \min(m^{1/3},
m^{2\alpha})$; however, here $C$ may take any value $\geq t$.
Define a hash function $\calC \colon \calV(G) \rightarrow \{1,\dots,C\}$
that assigns a color to each vertex of $G$.  For each vertex $v$,
$\calC(v)$ is a uniformly random color, and $\calC$ is chosen
uniformly at random from a family of $4k$-wise independent hash
functions.

Consider functions $f\colon \calE(\vec{H}) \rightarrow \calE(\vec{G})$.
There are $m^k$ such functions, but we want to find only the ones that
map $\vec{H}$ isomorphically onto its image.  Suppose that $f$ maps
the edges $\vect{a_1a_2},\dots,\vect{a_{2k-1}a_{2k}}$ to the edges
$\vect{v_1v_2},\dots, \vect{v_{2k-1}v_{2k}}$ respectively.
Then for any vertex $b \in \calV(\vec{H})$,
all of the vertices $\{a_i : i \in \Gamma(b)\}$ are equal to $b$;
i.e., they're all the same vertex.  Therefore, a necessary condition
for $f$ to induce an isomorphism is that all the vertices
$\{v_i : i \in \Gamma(b)\}$ are the same vertex.  In particular,
a necessary condition is that all the vertices $\{v_i : i \in \Gamma(b)\}$
have the same color.  Thus we say that either the map $f$ or the $k$-tuple
of edges $\vec{T} = (\vect{v_1v_2},\dots, \vect{v_{2k-1}v_{2k}})$ is
{\em color-compatible} if for every $b \in \calV(\vec{H})$, all the vertices
$\{v_i : i \in \Gamma(b)\}$ have the same color.  More specifically,
for any ordered $t$-tuple of colors $(c_1,\dots,c_t)$, we say that
$\vec{T}$ is {\em $(c_1,\dots,c_t)$-compatible} if for every
$b \in \calV(\vec{H})$, all the vertices $\{v_i : i \in \Gamma(b)\}$
have color $c_b$, or equivalently, if $\calC(v_i) = c_{a_i}$ for
every $1 \leq i \leq 2k$.  Thus $\vec{T}$ is color-compatible if
there exists a $t$-tuple $(c_1,\dots,c_t)$ such that $\vec{T}$
is $(c_1,\dots,c_t)$-compatible.  Furthermore, if $\vec{T}$ is
$(c_1,\dots,c_t)$-compatible and the $t$ colors $c_1,\dots,c_t$
are distinct, then we will say that $\vec{T}$ is {\em distinctly
color-compatible}.

As we saw in Lemma~\ref{lem:k_edges}, $\calQ(\vec{T})$ is equal to $I$
if $\vec{T}$ forms a homomorphic image of $\vec{H}$, and otherwise
is a uniformly random element of $\calG$.  The strategy in~\cite{KMSS}
is basically to compute the sum of $\calQ(\vec{T})$ over all $\vec{T}$.
The sum then has $m^k$ terms and therefore tends to have high variance.
Here, rather than summing over all $\vec{T}$, we will only sum over
distinctly color-compatible $\vec{T}$.  The resulting sum will then
have far fewer terms and therefore tend to have far lower variance.

For colors $c_1,c_2 \in \{1,\dots,C\}$ and $1 \leq i \leq k$, define
\begin{equation}
\label{eq:defZ}
\calZ^{c_1,c_2}_i = \sum_{\begin{subarray}{c}\vect{vw} \in \calE(\vec{G})\ : \\
              \calC(v)=c_1,\ \calC(w)=c_2\end{subarray}} \calM_i(\vect{vw}) \, .
\end{equation}
Thus there are $C^2k$ such sums, and $\calZ^{c_1,c_2}_i$ is the
sum of $\calM_i(\vect{vw})$ over all edges $\vect{vw}$ for which
the color of $v$ is $c_1$ and the color of $w$ is $c_2$.  Also,
define
\begin{equation}
\label{eq:def_subS}
\calS_{(c_1,\dots,c_t)} = \prod_{i=1}^k
\calZ_i^{c_{a_{2i-1}},c_{a_{2i}}} \, .
\end{equation}

We use $E(\,)$ to denote expected value (not to be confused with
$\calE(\,)$, which refers to the edge-set).  We use $\tr(\,)$ to denote
the trace of a matrix.

\begin{lemma}
\label{lem:distinct_color}
For $c_1,\dots,c_t$ distinct, $E(\tr(\calS_{(c_1,\dots,c_t)})/d)$
is equal to the number of $(c_1,\dots,c_t)$-compatible maps
$f \colon \calE(\vec{H}) \rightarrow \calE(\vec{G})$ that induce injective
homomorphisms from $\vec{H}$ to $\vec{G}$.
\end{lemma}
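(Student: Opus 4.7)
The plan is to expand $\calS_{(c_1,\dots,c_t)}$ as an explicit sum and then apply Lemma~\ref{lem:k_edges} term by term. Substituting the definition~\eqref{eq:defZ} of each $\calZ_i^{c_1,c_2}$ into~\eqref{eq:def_subS} and expanding the product of sums, we obtain
\[
\calS_{(c_1,\dots,c_t)} \;=\; \sum_{\vec{T}} \prod_{i=1}^k \calM_i(\vect{v_{2i-1}v_{2i}}) \;=\; \sum_{\vec{T}} \calQ(\vec{T}),
\]
where the sum is over all $k$-tuples $\vec{T} = (\vect{v_1v_2},\dots,\vect{v_{2k-1}v_{2k}})$ of edges of $\vec{G}$ that are $(c_1,\dots,c_t)$-compatible. (Here I use that the $\calX_i(v)$ are diagonal matrices, so everything commutes and matrix multiplication is just entrywise multiplication along the diagonal.)

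Next I would take traces and use linearity of expectation:
\[
E\!\left(\frac{\tr(\calS_{(c_1,\dots,c_t)})}{d}\right) \;=\; \sum_{\vec{T} \text{ compatible}} E\!\left(\frac{\tr(\calQ(\vec{T}))}{d}\right).
\]
For each fixed compatible $\vec{T}$, let $f$ be the induced edge-map sending $\vect{a_{2i-1}a_{2i}}$ to $\vect{v_{2i-1}v_{2i}}$. By Lemma~\ref{lem:k_edges}, if $f$ is a homomorphism then $\calQ(\vec{T}) = I$ deterministically, so $\tr(\calQ(\vec{T}))/d = 1$; otherwise $\calQ(\vec{T})$ is a uniformly random element of $\calG$, and since $\calG$ was chosen precisely so that $\sum_{g \in \calG} g/|\calG|$ is the zero matrix, this contributes $E(\tr(\calQ(\vec{T}))/d) = 0$. (Implicitly the expectation is over the hash functions $\calX_i$, with the coloring $\calC$ fixed, so that the index set of compatible $\vec{T}$ is deterministic.) Thus the expectation equals the number of $(c_1,\dots,c_t)$-compatible $\vec{T}$ for which $f$ is a homomorphism.

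Finally, I need to upgrade ``homomorphism'' to ``injective homomorphism.'' This is where the hypothesis that $c_1,\dots,c_t$ are distinct enters. For a compatible $\vec{T}$, all vertices in $\{v_i : i \in \Gamma(b)\}$ share color $c_b$, so if $f$ is a homomorphism it sends the vertex $b$ of $\vec{H}$ to a vertex of color $c_b$; since the $c_b$ are distinct, distinct vertices of $\vec{H}$ map to vertices of distinct colors, and the induced vertex map is automatically injective. Conversely, every injective homomorphism of $\vec{H}$ into $\vec{G}$ whose vertex coloring matches $(c_1,\dots,c_t)$ gives rise to exactly one such compatible $\vec{T}$. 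Combining these two points finishes the proof.

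The only delicate point — and the step I would want to double-check — is the reduction ``uniformly random in $\calG$ implies zero expectation of the trace,'' but this is immediate from the defining property that the average of $\calG$ is the zero matrix, together with linearity of the trace. The rest is a clean expansion and bookkeeping argument, and $4k$-wise independence of the hash functions is not needed here (it only enters in the variance analysis), since Lemma~\ref{lem:k_edges} already isolates each individual $\vec{T}$ with just $2k$ vertex-evaluations.
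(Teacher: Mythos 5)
Your proof is correct and follows essentially the same route as the paper: expand $\calS_{(c_1,\dots,c_t)}$ into a sum of $\calQ(\vec{T})$ over compatible $\vec{T}$, apply linearity of expectation and Lemma~\ref{lem:k_edges} term by term (zero contribution from non-homomorphisms since the average of $\calG$ is the zero matrix, contribution $d$ to the trace from homomorphisms), and observe that color-distinctness forces any surviving homomorphism to be injective. Your remark that only the variance analysis needs $4k$-wise independence is also accurate.
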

\begin{proof}
From the definitions of $\calS_{(c_1,\dots,c_t)}$ and $\calZ^{c_1,c_2}_i$,
we have
\begin{eqnarray}
\calS_{(c_1,\dots,c_t)} & = &
             \prod_{i=1}^k \calZ_i^{c_{a_{2i-1}},c_{a_{2i}}} \nonumber \\
      & = & \prod_{i=1}^k
             \sum_{\begin{subarray}{c}\vect{vw} \in \calE(\vec{G}),\\
                                         \calC(v)=c_{a_{2i-1}},\\
                                         \calC(w)=c_{a_{2i}}\end{subarray}}
                                  \calM_i(\vect{vw}) \nonumber \\
      & = & \sum_{\begin{subarray}{c}
              \vect{v_1v_2},\dots,\vect{v_{2k-1}v_{2k}},\\
              \calC(v_j) = c_{a_j} {\rm \ for \ } 1\leq j \leq 2k
            \end{subarray}}
            \prod_{i=1}^k \calM_i(\vect{v_{2i-1}v_{2i}}) \nonumber \\
      & = & \sum_{ \begin{subarray}{c}
               \vect{v_1v_2},\dots,\vect{v_{2k-1}v_{2k}} \\
               {\rm is \ }(c_1,\dots,c_t)\mbox{-}{\rm compatible}
            \end{subarray}}
            \prod_{i=1}^k \calM_i(\vect{v_{2i-1}v_{2i}}) \nonumber \\
\label{eq:subS}
      & = & \sum_{ \begin{subarray}{c}
          \vec{T} = (\vect{v_1v_2},\dots,\vect{v_{2k-1}v_{2k}}) \\
               {\rm is \ }(c_1,\dots,c_t)\mbox{-}{\rm compatible}
            \end{subarray}} \calQ(\vec{T}) \, .
\end{eqnarray}
In that last sum, there is one term for every $(c_1,\dots,c_t)$-compatible
map $f \colon \calE(\vec{H}) \rightarrow \calE(\vec{G})$.  Consider any one such term.
By Lemma~\ref{lem:k_edges}, if $f$ does not induce a homomorphism from
$\vec{H}$ to $\vec{G}$, then that term is a uniformly random element
of $\calG$, and, by our assumption on $\calG$, its trace therefore
has expected value 0.  Thus those terms do not contribute to
$E(\tr(\calS_{(c_1,\dots,c_t)}))$.  If $f$ does induce such a homomorphism,
then by Lemma~\ref{lem:k_edges}, that term is equal to $I$, so it
contributes $d$ to the trace of $\calS_{(c_1,\dots,c_t)}$.
Thus $E(\tr(\calS_{(c_1,\dots,c_t)})/d)$ is equal to the number
of $(c_1,\dots,c_t)$-compatible maps $f$ that induce homomorphisms
from $\vec{H}$ to $\vec{G}$.  Since the colors $c_1,\dots,c_t$ were assumed
to be distinct, any such homomorphism sends the vertices of $\vec{H}$ to
vertices of $\vec{G}$ with different colors and is therefore injective.
\end{proof}

Define
$$
\calS = \sum_{\begin{subarray}{c}(c_1,\dots,c_t)\\{\rm distinct}\end{subarray}}
                                        \calS_{(c_1,\dots,c_t)} \, .
$$

\begin{theorem}
\label{thm:alg1}
$$E\left( \frac{C^t}{C(C-1)\cdots(C-t+1)} \cdot
          \frac{\tr(\calS)}{d \cdot \auto(H)} \right) = \#H \, ,$$
          where $\auto(H)$ is the number of automorphisms of $H.$
\end{theorem}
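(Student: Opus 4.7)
The plan is to evaluate $E(\tr(\calS)/d)$ via a tower of expectations, using Lemma~\ref{lem:distinct_color} for the randomness in the hash functions $\calX_j$ and a separate argument for the randomness in $\calC$. First, by linearity and the definition of $\calS$,
\[ \frac{1}{d}\,E_{\calX}\bigl(\tr(\calS)\bigr) \;=\; \sum_{\substack{(c_1,\dots,c_t)\\ \text{distinct}}} \frac{1}{d}\,E_{\calX}\bigl(\tr(\calS_{(c_1,\dots,c_t)})\bigr), \]
so Lemma~\ref{lem:distinct_color} identifies the right-hand side with the total number of pairs $(\phi,(c_1,\dots,c_t))$, where $\phi\colon\vec{H}\to\vec{G}$ is an injective homomorphism, the $c_b$ are distinct, and $\calC(\phi(b))=c_b$ for each $b$. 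For each injective $\phi$ there is at most one compatible tuple, namely $c_b=\calC(\phi(b))$, and this tuple is valid precisely when the $t$ colors $\calC(\phi(1)),\dots,\calC(\phi(t))$ are actually distinct.

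Next I would average over $\calC$. For any fixed injective $\phi$, the images $\phi(1),\dots,\phi(t)$ are $t$ distinct vertices of $G$, so by the $4k$-wise independence of $\calC$ (and $t\le 2k\le 4k$) their colors are mutually independent and uniform on $\{1,\dots,C\}$, making them pairwise distinct with probability exactly $C(C-1)\cdots(C-t+1)/C^t$. Combining with the previous display via the tower law,
\[ \frac{1}{d}\,E\bigl(\tr(\calS)\bigr) \;=\; \frac{C(C-1)\cdots(C-t+1)}{C^t}\cdot N, \]
where $N$ is the number of injective homomorphisms $\vec{H}\to\vec{G}$.

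Finally, since $\vec{G}$ contains both orientations of every undirected edge of $G$, injective homomorphisms $\vec{H}\to\vec{G}$ are the same data as injective homomorphisms $H\to G$, which in turn are in $\auto(H)$-to-$1$ correspondence with the copies of $H$ in $G$; hence $N=\#H\cdot\auto(H)$. Multiplying by the normalization factor $C^t/\bigl(C(C-1)\cdots(C-t+1)\cdot\auto(H)\bigr)$ then yields the stated identity. The main piece of care is the two-layer expectation together with the fact that Lemma~\ref{lem:distinct_color} is applied conditionally on $\calC$; beyond that, the probability computation and the injective-homomorphism count are standard, so I do not expect a real obstacle.
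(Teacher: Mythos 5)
Your proposal is correct and follows essentially the same route as the paper's own proof: apply Lemma~\ref{lem:distinct_color} to identify $E(\tr(\calS)/d)$ (conditionally on the coloring) with the number of injective homomorphic images of $\vec{H}$ whose vertices receive distinct colors, then average over $\calC$ to get the $C(C-1)\cdots(C-t+1)/C^t$ factor, and finally quotient by $\auto(H)$. You are a bit more explicit than the paper about the two-layer expectation and about why $4k$-wise independence of $\calC$ suffices, but the substance is the same.
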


\begin{proof}
By Lemma~\ref{lem:distinct_color}, if $c_1,\dots,c_t$ are distinct colors,
then $\tr(\calS_{(c_1,\dots,c_t)})/d$ gives an unbiased estimate of
the number of $(c_1,\dots,c_t)$-compatible maps $f\colon\calE(\vec{H}) \rightarrow
\calE(\vec{G})$ that induce injective homomorphisms from $\vec{H}$ to $\vec{G}$,
i.e., the number of injective homomorphic images of $\vec{H}$ in
$\vec{G}$ whose vertices have colors $c_1,\dots,c_t$ respectively.  Summing
over distinct $c_1,\dots,c_t$, we see that $\tr(\calS)/d$ gives an unbiased
estimate of the number of injective homomorphic images whose vertices have
distinct colors.  The probability that a randomly colored injective homomorphic
image of $\vec{H}$ has distinct colors is $$\frac{C(C-1)\cdots(C-t+1)}{C^t},$$
so we divide by this expression.  Finally, each copy of $H$ gets counted
as $\auto(H)$ different injective homomorphic images, so we divide by
$\auto(H)$.
\end{proof}

Theorem~\ref{thm:alg1} provides the method for counting copies of
$H$.  As the edges stream by, we compute the sums $\calZ^{c_1,c_2}_i$.
In particular, if the edge $\vect{vw}$ streams by, then for each
$1 \leq i \leq k$, we compute $\calM_i(\vect{vw})$ and add it to the
sum $\calZ_i^{\calC(v),\calC(w)}$.  (For an edge-deletion, we subtract
$\calM_i(\vect{vw})$ from $\calZ_i^{\calC(v),\calC(w)}$.)  Once the
data-stream has ended, for every $t$-tuple of distinct colors
$(c_1,\dots,c_t)$, we compute the product $\calS_{(c_1,\dots,c_t)}$
using Equation~\eqref{eq:def_subS}.  Finally, we sum these values
to get $S$, take the trace, and multiply by
$$\frac{C^t}{C(C-1)\cdots(C-t+1) \cdot d \cdot \auto(H)} $$
to get the final estimate.  We refer to this as {\em Algorithm 1}
and summarize the steps in Table~\ref{tab:alg1}.
Observe that after the data-stream ends, we do a potentially large
computation, which could involve computing roughly $C^t$ values
$\calS_{(c_1,\dots,c_t)}$.  There are often, but not always, ways
to do this computation with less than $C^t$ work.  This is discussed
further in Section~\ref{sec:discussion}.

\begin{table}
\begin{center}
\noindent
\fbox{
\parbox[1cm]{12cm}{
\begin{description}
  \item[{\bf Initialize:}] \ \newline
    For $c_1,c_2 \in \{1,\dots,C\}$ and each
    $1 \leq i \leq k$, set $\calZ^{c_1,c_2}_i = 0$.
    \vspace{.2 in}
  \item[{\bf Update:}] \ \newline
    When an edge $\vect{vw}$ streams by, for each $1 \leq i \leq k$, update
    \begin{eqnarray*}
     \calZ_i^{\calC(v),\calC(w)} & \leftarrow &
       \calZ_i^{\calC(v),\calC(w)} + \calM_i(\vect{vw})\,,
       \text{\ \ \ \ for an insertion,} \\
     \calZ_i^{\calC(v),\calC(w)} & \leftarrow &
       \calZ_i^{\calC(v),\calC(w)} - \calM_i(\vect{vw})\,,
       \text{\ \ \ \ for a deletion.}
    \end{eqnarray*}
  \item[{\bf Final Computation:}] \ \newline
    For $(c_1,\dots,c_t)$ distinct, compute
    $$\calS_{(c_1,\dots,c_t)} = \prod_{i=1}^k
                              \calZ_i^{c_{a_{2i-1}},c_{a_{2i}}}\, .$$
    Then compute
    $$ \calS = \sum_{\begin{subarray}{c}(c_1,\dots,c_t)\\
                                      {\rm distinct}\end{subarray}}
                                        \calS_{(c_1,\dots,c_t)} \, .$$
    Output
    $$\left(\frac{C^t}{C(C-1)\cdots(C-t+1)}\right)
      \left(\frac{\tr(\calS)}{d \cdot \auto(H)}\right) \, .$$
\end{description}
}}
\end{center}
\caption{Algorithm 1}
\label{tab:alg1}
\end{table}

In the case where $\calG = \{\pm I, \pm M, \pm M^2, \dots, \pm M^{d-1}\}$
with $d>1$, a very slight modification to Algorithm~1 reduces the
update time per edge by roughly a factor of $d$.  In this modified
algorithm, which we call {\em Algorithm~2}, we do not compute
the sums $\calZ_i^{c_1,c_2}$ until after the data stream has ended.
Instead, we keep counts of how many times each $M^j$ would have contributed
to $\calZ_i^{c_1,c_2}$.  Thus we have a count for each $i,j,c_1,c_2$,
which we call $\Count_{c_1,c_2}(i,j)$.  Suppose that when some edge
$\vect{vw}$ streams by, we compute $\calM_i(\vect{vw})$ and find that
it is equal to $M^j$.  Rather than immediately adding $M^j$ to
$\calZ_i^{\calC(v),\calC(w)}$, we add~1 to $\Count_{\calC(v),\calC(w)}(i,j)$.
(If $\vect{vw}$ is an edge-deletion or if $\calM_i(\vect{vw})$ is
equal to $-M^j$, then we instead subtract~1 from the count.)
Thus, rather than updating $d$ diagonal entries, we update one count,
saving a factor of $d$ in update time.  The storage does not change
much: for each $\calZ_i^{c_1,c_2}$, rather than storing the values
of~$d$ diagonal entries, we store $d$ counts.  After the data
stream ends, we compute each
\begin{equation}
\label{eq:fft}
\calZ_i^{c_1,c_2} = \sum_{j=0}^{d-1} \Count_{c_1,c_2}(i,j) M^j \, .
\end{equation}
Note that Equation~\eqref{eq:fft} can be evaluated using a fast
Fourier transform, though this is unlikely to have much effect
on the overall run time.  The steps of Algorithm 2 are summarized
in Table~\ref{tab:alg2}.

\begin{table}
\begin{center}
\noindent
\fbox{
\parbox[1cm]{12cm}{
\begin{description}
  \item[{\bf Initialize:}] \ \newline
    For all $c_1,c_2 \in \{1,\dots,C\}$, $1 \leq i \leq k$,
    and $1 \leq j \leq d$,
    set $\Count_{c_1,c_2}(i,j) = 0$.
    \vspace{.2 in}
  \item[{\bf Update:}] \ \newline
    When an insertion edge $\vect{vw}$ streams by,
    for each $1 \leq i \leq k$,\newline
    if $\calM_i(\vect{vw}) = M^j$, then increment
    $\Count_{\calC(v),\calC(w)}(i,j)$;\newline
    if $\calM_i(\vect{vw}) = -M^j$, then decrement
    $\Count_{\calC(v),\calC(w)}(i,j)$. \newline
    For a deletion edge, interchange the increment and decrement.
    \vspace{.2 in}
  \item[{\bf Final Computation:}] \ \newline
    For $c_1,c_2 \in \{1,\dots,C\}$ and each
    $1 \leq i \leq k$,
    compute
    $$ \calZ_i^{c_1,c_2} = \sum_{j=0}^{d-1} \Count_{c_1,c_2}(i,j) M^j \, .$$
    For $(c_1,\dots,c_t)$ distinct, compute
    $$\calS_{(c_1,\dots,c_t)} = \prod_{i=1}^k
                              \calZ_i^{c_{a_{2i-1}},c_{a_{2i}}}\, .$$
    Then compute
    $$ \calS = \sum_{\begin{subarray}{c}(c_1,\dots,c_t)\\
                                      {\rm distinct}\end{subarray}}
                                        \calS_{(c_1,\dots,c_t)} \, .$$
    Output
    $$\left(\frac{C^t}{C(C-1)\cdots(C-t+1)}\right)
      \left(\frac{\tr(\calS)}{d \cdot \auto(H)}\right) \, .$$
\end{description}
}}
\end{center}
\caption{Algorithm 2; here, we are using the group
$\calG=\{\pm I, \pm M, \pm M^2, \dots, \pm M^{d-1}\}$.}
\label{tab:alg2}
\end{table}
\section{The Variance}
\label{sec:var}

In this section, we bound the variance of the estimate given
by our algorithm.  Note that the variance is the same whether we
use Algorithm~1 or Algorithm~2, since they produce the same estimate,
so we do not distinguish between the two.  The variance does however
depend on the choice of $\calG$, and our proof only applies when
$\calG$ is either the group of $r^{\rm th}$ roots of unity or the group
$\{\pm I, \pm M, \pm M^2, \dots, \pm M^{d-1}\}$.  In either case,
the variance is a large sum, but most terms in the sum are zero.
In Section~\ref{sec:roots_of_unity}, we give conditions that classify
which terms contribute non-trivially to the sum when $\calG$ is the group
of $r^{\rm th}$ roots of unity.  In Section~\ref{sec:diagonal_matrices},
we do the same when $\calG$ is the group $\{\pm I, \dots, \pm M^{d-1}\}$.
In Section~\ref{sec:bound_var}, we bound the number of terms that
satisfy those conditions, obtaining our bound.

Our estimate of $\#H$ (which is given in Theorem~\ref{thm:alg1}) has variance
\begin{equation}
\label{eq:var}
\left(\frac{C^t}{C(C-1)\cdots(C-t+1) \cdot d \cdot \auto(H)}\right)^2
E\left(\tr(\calS) \tr(\overline{\calS}) \right) - (\#H)^2 \, ,
\end{equation}
where $\overline{\calS}$ denotes the complex conjugate of $\calS$.  We thus
wish to understand the term $E\left(\tr(\calS) \tr(\overline{\calS})\right)$.

From Equation~\eqref{eq:subS},
$$ \calS_{(c_1,\dots,c_t)} = \sum_{ \begin{subarray}{c}
               \vec{T} = (\vect{v_1v_2},\dots,\vect{v_{2k-1}v_{2k}}) \\
               \text{is $(c_1,\dots,c_t)$-compatible}
            \end{subarray}} \calQ(\vec{T}) \, , $$
so
\begin{equation}
\label{eq:S}
  \calS = \sum_{\begin{subarray}{c}
            (c_1,\dots,c_t) \\
            {\rm distinct}\end{subarray}} \ 
          \sum_{ \begin{subarray}{c}
             \vec{T} = (\vect{v_1v_2},\dots,\vect{v_{2k-1}v_{2k}}) \\
             \text{is $(c_1,\dots,c_t)$-compatible}
            \end{subarray}} \calQ(\vec{T}) 
        = \sum_{ \begin{subarray}{c}
             \vec{T} = (\vect{v_1v_2},\dots,\vect{v_{2k-1}v_{2k}}) \ \text{is}\\
             \text{distinctly color-compatible}
           \end{subarray}} \calQ(\vec{T}) 
\, . 
\end{equation}
Thus $\tr(\calS)\tr(\overline{\calS})$ is a sum of terms of the form
\begin{equation}
\label{eq:one_term}
 \tr(\calQ(\vec{T_1})) \tr(\overline{\calQ(\vec{T_2})}) \, .
\end{equation}
In particular, there is one term for every $2k$-tuple of edges
$(\vec{T_1},\vec{T_2})$ for which
$\vec{T_1} = \vect{v_1v_2},\dots,\vect{v_{2k-1}v_{2k}}$ is
distinctly color-compatible and
$\vec{T_2} = \vect{w_1w_2},\dots,\vect{w_{2k-1}w_{2k}}$ is
distinctly color-compatible.  In contrast, for the [KMSS]-algorithm,
the analogous expression for the variance has a term for each
$2k$-tuple of edges regardless of color-compatibility.

For most $2k$-tuples of edges $(\vec{T_1},\vec{T_2})$, the product
\eqref{eq:one_term} has expected value~0 and therefore does not
contribute to the variance.  Here we classify the $2k$-tuples that
do contribute to the variance.  Consider some $2k$-tuple of
edges $(\vec{T_1},\vec{T_2})$, and consider any vertex $b \in \calV(H)$.
We consider three conditions that the $2k$-tuple may or may not
satisfy at $b$:
\begin{description}
  \item[{\bf Condition 1:}] The vertices $\{v_i: i \in \Gamma(b)\}$ are all
    the same, and the vertices $\{w_i: i \in \Gamma(b)\}$ are all the same.
    \vspace{.05 in}
  \item[{\bf Condition 2:}] $v_i = w_i$ for all $i\in \Gamma(b)$.
    \vspace{.05 in}
  \item[{\bf Condition 3:}] There are vertices $x,y \in \calV(\vec{G})$ such
    that for every $i\in \Gamma(b)$, either $v_i=x$ and $w_i=y$, or
    $v_i=y$ and $w_i=x$.
\end{description}
Note that Condition 1 is a special case of Condition 3.
In general, when Condition~1 is satisfied at every vertex of
$\vec{H}$, each of $\vec{T_1}$ and $\vec{T_2}$ forms a homomorphic
image of $\vec{H}$.  In general, when Condition~2 is satisfied at
every vertex of $\vec{H}$, $\vec{T_1}$ is an arbitrary collection
of $k$ edges, and $\vec{T_2} = \vec{T_1}$.

The following lemma turns Conditions~1--3 into conditions on
$\calP_b(\vec{T_1})$ and $\calP_b(\vec{T_2})$.  Those conditions
will later let us characterize which $\tr(\calQ(\vec{T_1}))
\tr(\overline{\calQ(\vec{T_2})})$ contribute to the variance.

\begin{lemma}
\label{lem:conditions}
Suppose $\vec{T} = (\vec{T_1},\vec{T_2})$ is any $2k$-tuple of
edges of $\vec{G}$.
\begin{enumerate}
  \item \label{condA} If $\vec{T}$ satisfies Condition 1 at $b$, then
    $\calP_b(\vec{T_1}) = \calP_b(\vec{T_2}) = I$.
  \item \label{condB} If $\vec{T}$ satisfies Condition 2 at $b$ but not Condition 1,
    then $\calP_b(\vec{T_1}) = \calP_b(\vec{T_2})$, and each is a
    uniformly random element of $\calG$.
  \item \label{condC} If $\vec{T}$ satisfies Condition 3 at $b$ but not Condition 1,
    then $\calP_b(\vec{T_1}) = \calP_b(\vec{T_2})^{-1}$, and each
    is a uniformly random element of $\calG$.
  \item \label{condD} If $\vec{T}$ does not satisfy Condition 1,2, or 3 at $b$,
    then either $\calP_b(\vec{T_1})$ or~$\calP_b(\vec{T_2})$ is a
    uniformly random element of $\calG$ and is independent of the
    other.
\end{enumerate}
\end{lemma}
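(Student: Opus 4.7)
My plan is to prove Parts 1--3 by direct computation via Lemma~\ref{lem:one_vertex}, and to reduce Part 4 to a combinatorial case analysis. Part~1 is immediate: Condition~1 makes all $v_i$ (for $i\in\Gamma(b)$) equal to a single vertex of $G$, and likewise for the $w_i$, so Lemma~\ref{lem:one_vertex} yields $\calP_b(\vec{T_1})=\calP_b(\vec{T_2})=I$. For Part~2, Condition~2 makes the defining products of $\calP_b(\vec{T_1})$ and $\calP_b(\vec{T_2})$ termwise identical, and the failure of Condition~1 means the $v_i$'s are not all equal, so the common value is uniform by Lemma~\ref{lem:one_vertex}. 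For Part~3, since $\calG$ is abelian, each factor $\calX_i(v_i)\calX_i(w_i)$ equals $\calX_i(x)\calX_i(y)$ regardless of orientation, so
\[
\calP_b(\vec{T_1})\,\calP_b(\vec{T_2}) \;=\; \Bigl(\prod_{i\in\Gamma(b)}\calX_i(x)\Bigr)\Bigl(\prod_{i\in\Gamma(b)}\calX_i(y)\Bigr) \;=\; I\cdot I
\]
by Lemma~\ref{lem:one_vertex}; hence $\calP_b(\vec{T_2})=\calP_b(\vec{T_1})^{-1}$, and uniformity again follows from the failure of Condition~1.

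For Part~4 the plan is to unfold the distinguished half-edge. Let $i_\delta$ denote the distinguished index in $\Gamma(b)$; substituting the definition of $\calX_{i_\delta}$ rewrites
\[
\bigl(\calP_b(\vec{T_1}),\,\calP_b(\vec{T_2})\bigr) \;=\; \prod_{j\in\Gamma(b)\setminus\{i_\delta\}} F_j,
\]
where $F_j := \bigl(\calX_j(v_j)\calX_j(v_{i_\delta})^{-1},\;\calX_j(w_j)\calX_j(w_{i_\delta})^{-1}\bigr)\in\calG\times\calG$ and the product is taken component-wise. Different half-edges use independent hash functions, so the factors $F_j$ are mutually independent. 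I will call $F_j$ \emph{good} if one of its two components is uniform on $\calG$ and independent of the other. If some $F_{j^*}$ is good---say its first component is uniform and independent of its second---then since $F_{j^*}$ is independent of the remaining factors, that component stays uniform after conditioning on everything else; consequently the component-wise product has its first component uniform and independent of its second, which is exactly the conclusion of Part~4.

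It remains to show that if every $F_j$ is \emph{bad}, then one of Conditions~1,~2, or~3 must hold. By the $4$-wise independence of $\calX_j$, a short enumeration of the equality patterns among the four vertices $v_j,\,v_{i_\delta},\,w_j,\,w_{i_\delta}$ shows that $F_j$ is bad in exactly three configurations: (I) $v_j=v_{i_\delta}$ and $w_j=w_{i_\delta}$; (II) $v_j=w_j$ and $v_{i_\delta}=w_{i_\delta}$, with $v_j\neq v_{i_\delta}$; (III) $v_j=w_{i_\delta}$ and $w_j=v_{i_\delta}$, with $v_j\neq v_{i_\delta}$. I then split on whether $v_{i_\delta}=w_{i_\delta}$: in that case (III) is impossible and both (I) and (II) force $v_j=w_j$ for every $j\neq i_\delta$, yielding Condition~2; otherwise (II) is impossible and (I) and (III) force $(v_j,w_j)\in\{(x,y),(y,x)\}$ with $x=v_{i_\delta}$ and $y=w_{i_\delta}$, yielding Condition~3. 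The main obstacle is the preliminary enumeration of equality patterns---in particular, verifying that whenever at most one of the six pairwise equalities among $v_j,\,v_{i_\delta},\,w_j,\,w_{i_\delta}$ holds, $F_j$ is already uniform on all of $\calG\times\calG$---after which the combinatorial split above is straightforward.
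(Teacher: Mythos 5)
Your proposal is correct and follows essentially the same route as the paper for Parts 1, 2, and 4: you unfold the distinguished half-edge, observe that the per-half-edge factors are mutually independent, and establish Part 4 by showing a single ``good'' factor forces the conclusion and that all-bad forces Condition~2 or~3 via a split on whether $v_{i_\delta}=w_{i_\delta}$ --- this is the paper's argument reorganized into a good/bad dichotomy. The one genuine divergence is Part~3, where you improve on the paper slightly: rather than matching factors termwise (the paper shows $\calX_i(v_i)\calX_i(v_\delta)^{-1}$ is the inverse of $\calX_i(w_i)\calX_i(w_\delta)^{-1}$ for each $i$), you multiply $\calP_b(\vec{T_1})\calP_b(\vec{T_2})$ directly, use commutativity to regroup, and invoke the defining relation $\prod_{j\in\Gamma(b)}\calX_j(v)=I$ to collapse the whole product to $I$ at once --- this avoids singling out the distinguished half-edge and is a bit cleaner. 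One small imprecision in your closing aside: the claim that $F_j$ is uniform on all of $\calG\times\calG$ whenever at most one pairwise equality holds is false as stated (if $v_j=v_{i_\delta}$ is the sole equality, the first component of $F_j$ is deterministically $I$), and that aside also omits the pattern where three of the four vertices coincide; but the definition of ``good'' you actually use in the argument (one component uniform and independent of the other) is the correct one, your list of bad configurations (I)--(III) is exactly right, and the proof goes through.
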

\begin{proof}
Suppose that $\vec{T_1} = (\vect{v_1v_2},\dots,\vect{v_{2k-1}v_{2k}})$
and $\vec{T_2} = (\vect{w_1w_2},\dots,\vect{w_{2k-1}w_{2k}})$.
If $\vec{T}$ satisfies Condition 1 at $b$, then by Lemma~\ref{lem:one_vertex},
$\calP_b(T_1) = I$ and $\calP_b(T_2) = I$.

Now suppose that Condition 1 is not satisfied at $b$.  Let $h_\delta$ be the
distinguished half-edge at $b$.  Then
$$\calX_{\delta}(v_{\delta}) =
\prod_{i \in \Gamma(b) \setminus \delta} \calX_i(v_{\delta})^{-1} \, ,$$
so
$$\calP_b(\vec{T_1}) = \prod_{i \in \Gamma(b) \setminus \delta}
                    \calX_i(v_i) \calX_i(v_{\delta})^{-1} \, .$$
Similarly,
$$\calP_b(\vec{T_2}) = \prod_{i \in \Gamma(b) \setminus \delta}
                    \calX_i(w_i) \calX_i(w_{\delta})^{-1} \, .$$
Since Condition 1 is not satisfied, either some $v_i \neq v_{\delta}$
or some $w_i \neq w_{\delta}$.  Assume it is the former.  Then
$\calX_i(v_i) \calX_i(v_{\delta})^{-1}$ is a uniformly random
element of $\calG$, and it is independent of
$\calX_j(v_j) \calX_j(v_{\delta})^{-1}$ for all $j \notin \{i,\delta\}$,
since then neither $h_i$ nor $h_j$ is distinguished.  Thus $\calP_b(\vec{T_1})$
is a uniformly random element of $\calG$.  Similarly, if $w_i \neq w_{\delta}$,
then $\calP_b(\vec{T_2})$ is a uniformly random element of $\calG$.

If $\vec{T}$ satisfies Condition 2, then for each $i \in \Gamma(b)$,
$\calX_i(v_i) = \calX_i(w_i)$, so $\calP_b(\vec{T_1}) = \calP_b(\vec{T_2})$.

If $\vec{T}$ satisfies Condition 3, then for each $i \in \Gamma(b)$,
either $v_i = v_{\delta}$ and $w_i = w_{\delta}$, or $v_i = w_{\delta}$
and $w_i = v_{\delta}$.  Either way, $\calX_i(v_i) \calX_i(v_{\delta})^{-1}$
is the inverse of $\calX_i(w_i) \calX_i(w_{\delta})^{-1}$, so
$\calP_b(\vec{T_1}) = \calP_b(\vec{T_2})^{-1}$.

Suppose then that $\vec{T}$ does not satisfy any of the three conditions.
Suppose also that for some $i \in \Gamma(b)$,
one of $v_i$, $w_i$, $v_{\delta}$, and
$w_{\delta}$ differs from the other three.  Suppose the one that differs
is either $v_i$ or $v_{\delta}$. Then $\calX_i(v_i) \calX_i(v_{\delta})^{-1}$
is a uniformly random element of $\calG$, and it is independent of
$\calX_i(w_i) \calX_i(w_{\delta})^{-1}$.  It is also independent of
$\calX_j(v_j) \calX_j(v_{\delta})^{-1}$ and
$\calX_j(w_i) \calX_j(w_{\delta})^{-1}$ for all $j \notin \{i,\delta\}$.
Thus $\calP_b(\vec{T_1})$ is a uniformly random element of $\calG$ and
is independent of $\calP_b(\vec{T_2})$.  Similarly, if $w_i$ or $w_{\delta}$
was the one that differed from the other three, then $\calP_b(\vec{T_2})$
would be uniformly random and independent of $\calP_b(\vec{T_1})$.
Suppose then that for each~$i$, none of $v_i$, $w_i$, $v_{\delta}$,
and $w_{\delta}$ is different from the other three.  If
$v_{\delta} = w_{\delta}$, then Condition 2 must hold; whereas
if $v_{\delta} \neq w_{\delta}$, then Condition 3 must hold.
\end{proof}

\subsection{Variance When $\calG$ Consists of Roots of Unity}
\label{sec:roots_of_unity}

At this point, the discussion splits into two cases depending on whether
$\calG$ is a group of roots of unity or a group of matrices.  Here we
consider the former.  Therefore we fix some integer $r \geq 2$ and let
$\calG$ be the group of 1-by-1 matrices whose entries are $r^{\rm th}$
roots of unity.  Since the matrices are 1-by-1, we treat all matrices
as complex numbers rather than matrices.  Also, since the trace of a
1-by-1 matrix is equal to its entry, we simply remove ``$\tr$'' from any
equations.  Thus the expression~\eqref{eq:var} for variance becomes
\begin{equation}
\label{eq:var1}
\left(\frac{C^t}{C(C-1)\cdots(C-t+1) \cdot \auto(H)}\right)^2
E(\calS \overline{\calS}) - (\#H)^2 \, .
\end{equation}
Since $\calS \overline{\calS}$ is a sum of terms of the form
$\calQ(\vec{T_1}) \overline{\calQ(\vec{T_2})}$, the next theorem
classifies which pairs $(\vec{T_1},\vec{T_2})$ contribute to
$E(\calS \overline{\calS})$.

\begin{theorem}
\label{thm:var}
Let $\vec{T} = (\vec{T_1},\vec{T_2})$ be a $2k$-tuple of edges of $\vec{G}$.
If either of the following hold:
\begin{itemize}
  \item $\vec{T}$ satisfies Condition 1 or 2 for every $b \in \calV(\vec{H})$,
    or
  \item $r=2$, and $\vec{T}$ satisfies Condition 1, 2, or 3 for every
    $b \in \calV(\vec{H})$,
\end{itemize}
then $\calQ(\vec{T_1}) \overline{\calQ(\vec{T_2})} = 1$.  Otherwise,
$$E\left(\calQ(\vec{T_1}) \overline{\calQ(\vec{T_2})}\right) = 0.$$
\end{theorem}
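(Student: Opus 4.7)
The plan is to start from the factorization
$$\calQ(\vec{T_1})\,\overline{\calQ(\vec{T_2})} \;=\; \prod_{b\in\calV(\vec H)} \calP_b(\vec{T_1})\,\overline{\calP_b(\vec{T_2})}$$
that was already established when $\calQ$ and $\calP_b$ were defined. Since the sets $\Gamma(b)$ partition the half-edges of $\vec H$ and each $\calP_b$ uses only the hash functions $\calX_i$ with $i\in\Gamma(b)$ (the distinguished $\calX_i$ at $b$ being a function of the non-distinguished ones at $b$), the random variables $\calP_b(\vec{T_1})\,\overline{\calP_b(\vec{T_2})}$ for distinct vertices $b$ are mutually independent. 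Hence $E\bigl(\calQ(\vec{T_1})\,\overline{\calQ(\vec{T_2})}\bigr)$ factors as a product of per-vertex expectations, and it is enough to analyze one factor at a time.

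I would then do a case analysis on which of Conditions 1, 2, 3 holds at a given $b$, invoking Lemma~\ref{lem:conditions} and using the fact that for any root of unity $g\in\calG$ one has $\overline{g}=g^{-1}$. (i) Under Condition~1, $\calP_b(\vec{T_1})=\calP_b(\vec{T_2})=1$, so the factor is identically $1$. (ii) Under Condition~2 but not 1, $\calP_b(\vec{T_1})=\calP_b(\vec{T_2})$, so the factor is $|\calP_b(\vec{T_1})|^2=1$ identically. (iii) Under Condition~3 but not 1, $\calP_b(\vec{T_1})=\calP_b(\vec{T_2})^{-1}$, so the factor equals $\calP_b(\vec{T_2})^{-2}$ with $\calP_b(\vec{T_2})$ uniform on $\calG$; this is identically $1$ when $r=2$ (every element squares to $1$), whereas for $r\geq 3$ the map $g\mapsto g^{-2}$ is a non-trivial character on the cyclic group of $r$-th roots of unity, so its mean is $0$. (iv) If none of the three conditions holds at $b$, Lemma~\ref{lem:conditions} gives that one of $\calP_b(\vec{T_1}),\calP_b(\vec{T_2})$ is uniform on $\calG$ and independent of the other, and since a uniform root of unity has mean $0$, the factor has expectation $0$.

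Combining: if every vertex $b$ lies in case (i) or (ii), or if $r=2$ and every vertex lies in case (i), (ii), or (iii), then every factor equals $1$ identically and so $\calQ(\vec{T_1})\,\overline{\calQ(\vec{T_2})}=1$ deterministically; otherwise at least one factor has expectation $0$, and by vertex-wise independence $E\bigl(\calQ(\vec{T_1})\,\overline{\calQ(\vec{T_2})}\bigr)=0$.

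The main obstacle is keeping the case analysis clean and being careful at two places: first, justifying that the per-vertex factors are genuinely independent across $b$ (this uses the partition structure of $\Gamma$ together with the independence of the non-distinguished hash functions); and second, handling the degeneration at $r=2$, where $g^{-1}=g$ on $\calG$ causes Condition~3 to collapse into Condition~2 and thereby produces the second bullet of the theorem.
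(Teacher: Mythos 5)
Your proof is correct and follows essentially the same approach as the paper: the same per-vertex factorization of $\calQ(\vec{T_1})\,\overline{\calQ(\vec{T_2})}$, the same invocation of Lemma~\ref{lem:conditions}, the same observation that $\calP_b(\vec{T_1})\overline{\calP_b(\vec{T_2})}$ reduces to $\calP_b(\vec{T_1})^2$ under Condition~3 (giving the $r=2$ collapse), and the same independence-across-$b$ argument to conclude the expectation vanishes when some factor does. The only cosmetic difference is that you organize the argument as an upfront partition into four disjoint cases, whereas the paper treats the positive cases first and then the vanishing cases.
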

\begin{proof}
We can write $$\calQ(\vec{T_1}) \overline{\calQ(\vec{T_2})}$$ as
$$\prod_{b \in \calV(\vec{H})} \calP_b(\vec{T_1}) \overline{\calP_b(\vec{T_2})}.
$$
If $\vec{T}$ satisfies Condition~1 or~2 at some $b$, then by
Lemma~\ref{lem:conditions}, $\calP_b(\vec{T_1}) = \calP_b(\vec{T_2})$, so
$$\calP_b(\vec{T_1}) \overline{\calP_b(\vec{T_2})} = 
  \calP_b(\vec{T_1}) \overline{\calP_b(\vec{T_1})} =
  \calP_b(\vec{T_1}) / \calP_b(\vec{T_1}) = 1 \, .$$
If $r=2$ and $\vec{T}$ satisfies Condition~3 at some $b$, then
$\calP_b(\vec{T_1}) = \calP_b(\vec{T_2})^{-1}$, so
$$
\calP_b(\vec{T_1}) \overline{\calP_b(\vec{T_2})} = \calP_b(\vec{T_1})^2 = 1 \, .
$$
Thus if either of these two conditions holds at every $b$, then
$$
\prod_{b \in \calV(\vec{H})}\calP_b(\vec{T_1})\overline{\calP_b(\vec{T_2})} = 1 \, .
$$

Suppose now that at some $b$, Conditions~1 and~2 don't hold.  If
Condition~3 holds and $r>2$, then by Lemma~\ref{lem:conditions},
$\calP_b(\vec{T_1}) = \calP_b(\vec{T_2})^{-1}$, and each is a
uniformly random element of $\calG$.  Then
$\calP_b(\vec{T_1}) \overline{\calP_b(\vec{T_2})} = \calP_b(\vec{T_1})^2$,
and since $r>2,$ we have $$E\left(\calP_b(\vec{T_1})^2\right) = 0.$$  Thus
$$E\left(\calP_b(\vec{T_1}) \overline{\calP_b(\vec{T_2})}\right)=0.$$
If instead Condition~3 does {\em not} hold,
then by Lemma~\ref{lem:conditions}, one of $\calP_b(\vec{T_1})$ and 
$\calP_b(\vec{T_2})$ is a uniformly random $r^{\rm th}$ root of unity 
and is independent of the other, so again
$$E\left(\calP_b(\vec{T_1}) \overline{\calP_b(\vec{T_2})}\right)=0.$$ 
In either case, $\calP_b(\vec{T_1}) \overline{\calP_b(\vec{T_2})}$
is independent of $\calP_c(\vec{T_1}) \overline{\calP_c(\vec{T_2})}$
for $c \in \calV(\vec{H}) \setminus b$, so
$$E\left(\calQ(\vec{T_1}) \overline{\calQ(\vec{T_2})}\right) = 0.$$
\end{proof}
\subsection{Variance When $\calG = \{ \pm I, \pm M, \dots, \pm M^{d-1} \}$}
\label{sec:diagonal_matrices}

Now we consider the variance of our estimate in the case where $\calG$
is a group of matrices.  In particular, fix a dimension $d \geq 2$ and
let $\calG$ consist of the matrices $\{ \pm I, \pm M, \dots, \pm M^{d-1} \}$,
where $M$ is the diagonal matrix with entries
$1,\omega, \omega^2, \dots, \omega^{d-1}$, and $\omega = e^{2\pi i/d}$.

The variance of our estimate for $\#H$ is given by Expression~\eqref{eq:var}.
Note, however, that the trace of every element of $\calG$ is real, so we can
dispense with complex conjugation.  Thus the variance becomes
\begin{equation}
\label{eq:var2}
\left(\frac{C^t}{C(C-1)\cdots(C-t+1) \cdot d \cdot \auto(H)}\right)^2
E(\tr(\calS)^2) - (\#H)^2 \, .
\end{equation}
We thus wish to understand the term $E(\tr(\calS)^2)$.
Since $\tr(\calS)^2$ is a sum of terms of the form
$\tr(\calQ(\vec{T_1})) \tr(\calQ(\vec{T_2}))$, the next theorem
classifies how much each pair $(\vec{T_1},\vec{T_2})$ contributes
to $E(\tr(\calS)^2)$.

\begin{theorem}
\label{thm:var2}
Suppose $\vec{T} = (\vec{T_1},\vec{T_2})$ is a $2k$-tuple of edges of $\vec{G}$.
\begin{itemize}
  \item If $\vec{T}$ satisfies Condition~1 for every $b \in \calV(\vec{H})$,
    then $\tr(\calQ(\vec{T_1})) \tr(\calQ(\vec{T_2})) = d^2$.
  \item If $\vec{T}$ satisfies either Condition~1, 2, or~3 at every
    $b \in \calV(\vec{H})$ but not always Condition~1, then
    $$0 < E\left(\tr(\calQ(\vec{T_1})) \tr(\calQ(\vec{T_2}))\right) \leq d.$$
  \item Otherwise, $$E\left(\tr(\calQ(\vec{T_1}))
                           \tr(\calQ(\vec{T_2}))\right) = 0.$$
\end{itemize}
\end{theorem}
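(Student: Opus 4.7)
The plan is to mirror the structure of Theorem~\ref{thm:var}, but to exploit the explicit form of $\calG$. Every element of $\calG$ has a unique representation $\epsilon M^e$ with $\epsilon \in \{\pm 1\}$ and $e \in \{0,1,\dots,d-1\}$, so we may write $\calP_b(\vec{T_i}) = \epsilon_b^i M^{e_b^i}$ and multiply over $b \in \calV(\vec{H})$ to obtain $\calQ(\vec{T_i}) = \epsilon^i M^{e^i}$, where $\epsilon^i = \prod_b \epsilon_b^i$ and $e^i = \sum_b e_b^i \pmod d$. Since $\tr(M^e)$ equals $d$ when $e \equiv 0 \pmod d$ and $0$ otherwise, the product of traces reduces to
$$
\tr(\calQ(\vec{T_1})) \, \tr(\calQ(\vec{T_2})) \;=\; \epsilon^1 \epsilon^2 \cdot d^2 \cdot [e^1 \equiv 0] \cdot [e^2 \equiv 0],
$$
with Iverson brackets taken mod $d$. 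The proof thus reduces to a sign check and a probability calculation.

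For the first bullet (Condition~1 at every $b$), Lemma~\ref{lem:conditions}(A) forces $\calP_b(\vec{T_1}) = \calP_b(\vec{T_2}) = I$, so $\calQ(\vec{T_1}) = \calQ(\vec{T_2}) = I$ and the product is $d^2$ deterministically. For the third bullet (some $b$ fails all three conditions), Lemma~\ref{lem:conditions}(D) says that, without loss of generality, $\calP_b(\vec{T_1})$ is uniform on $\calG$ and independent of $\calP_b(\vec{T_2})$; since distinct vertices of $\vec{H}$ correspond to disjoint sets of half-edge hash functions, $\calP_b(\vec{T_1})$ is also independent of every $\calP_c(\vec{T_i})$ for $c \neq b$. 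Writing $\calQ(\vec{T_1}) = \calP_b(\vec{T_1}) \, U$ with $U$ diagonal and independent of $\calP_b(\vec{T_1})$, we get $\tr(\calQ(\vec{T_1})) = \sum_j (\calP_b(\vec{T_1}))_{jj} U_{jj}$; the hypothesis $\sum_{g \in \calG} g = 0$ forces each diagonal entry of a uniform $\calP_b(\vec{T_1})$ to have mean zero, so conditioning on everything else kills the expectation.

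The substantive case is the middle bullet. Lemma~\ref{lem:conditions}(B,C) gives $\calP_b(\vec{T_2}) = \calP_b(\vec{T_1})^{\pm 1}$ at every $b$, and since inversion sends $\epsilon M^e$ to $\epsilon M^{-e}$, the signs always match: $\epsilon_b^1 = \epsilon_b^2$ at every $b$. Hence $\epsilon^1 \epsilon^2 = 1$ and the whole expression is non-negative. Partition $\calV(\vec{H})$ into $B_1$ (Condition~1), $B_2$ (Condition~2 but not~1), and $B_3$ (Condition~3 but not~1), and set $A = \sum_{b\in B_2} e_b^1$ and $B = \sum_{b\in B_3} e_b^1 \pmod d$. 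Then $e^1 = A + B$ and $e^2 = A - B$. By independence across vertices, $A$ and $B$ are independent, each uniform on $\mathbb{Z}/d$ whenever the corresponding set is non-empty. A short computation of $P[A + B \equiv 0 \text{ and } A - B \equiv 0 \pmod d]$ in the three subcases (only $B_2$ non-empty, only $B_3$ non-empty, or both non-empty) yields probability $1/d$, $1/d$, or either $1/d^2$ (odd $d$) or $2/d^2$ (even $d$); multiplying by $d^2$ gives expectations $d$, $d$, $1$, or $2$ respectively, all of which lie in $(0,d]$.

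The main obstacle is this last case: one must verify strict positivity, check that the sign contribution $\epsilon^1 \epsilon^2$ cancels to $+1$ at every realization (so no hidden cancellation arises in the expectation), and handle the minor parity-of-$d$ subcasework needed for the upper bound $d$. The other two cases follow essentially immediately from Lemma~\ref{lem:conditions} together with the mean-zero property of $\calG$.
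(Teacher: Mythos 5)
Your proof is correct and follows essentially the same structure as the paper's: partition the vertices of $H$ by which condition holds, use Lemma~\ref{lem:conditions} to reduce to products over those classes, exploit cross-vertex independence, and handle the four cases (all Cond.~1; only Cond.~1/2; only Cond.~1/3; mixed). The paper works directly with the group elements $R_2 = \prod_{b\in B_2} \calP_b(\vec{T_1})$ and $R_3 = \prod_{b\in B_3} \calP_b(\vec{T_1})$, writing $\calQ(\vec{T_1}) = R_2 R_3$, $\calQ(\vec{T_2}) = R_2 R_3^{-1}$, and characterizing the nonzero-trace event as ``both $\pm I$''; you instead write everything in $(\epsilon, e)$-coordinates and reduce to a mod-$d$ count. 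The two formulations are equivalent, and the probabilities and expectations you obtain ($d$, $d$, $1$, or $2$) match the paper's exactly.

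One place where your write-up is actually a bit tighter: you make the sign cancellation explicit. The paper asserts that when $R_2R_3$ and $R_2R_3^{-1}$ are both $\pm I$ the trace product is $d^2$, which requires (and implicitly uses) the fact that these two elements always carry the same sign; your observation that inversion in $\calG$ preserves the sign, so $\epsilon_b^1 = \epsilon_b^2$ at every vertex, makes this a one-line check. Likewise in the last bullet, you invoke only the mean-zero property of a uniform $\calG$-element together with independence from the remaining factors, rather than asserting that $\calQ(\vec{T_1})$ itself is uniform and independent of $\calQ(\vec{T_2})$; both routes work, but yours requires one fewer intermediate claim. Overall, same proof, with a marginally more explicit handling of signs.
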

\begin{proof}
Suppose that Condition~1, 2, or~3 holds at every $b \in \calV(\vec{H})$.
Recall that
$\calQ(\vec{T_1}) = \prod_{b \in \calV(\vec{H})} \calP_b(\vec{T_1})$,
and $\calQ(\vec{T_2}) = \prod_{b \in \calV(\vec{H})} \calP_b(\vec{T_2})$.
Let $R_1$ denote the product of $\calP_b(\vec{T_1})$ over all
$b \in \calV(\vec{H})$ where Condition~1 holds.  Let $R_2$ denote
the same product at all $b \in \calV(\vec{H})$ where Condition~2 holds,
but not Condition~1.  Let $R_3$ denote the same product over all 
$b \in \calV(\vec{H})$ where Condition~3 holds, but not Condition~1.
(In each case, if the given conditions are not satisfied at any~$b$,
then define $R_i$ to be $I$.)  Thus $\calQ(\vec{T_1}) = R_1R_2R_3$.
By Lemma~\ref{lem:conditions}-\ref{condA}, $R_1 = I$, so $\calQ(\vec{T_1}) = IR_2R_3$.
By Lemmas~\ref{lem:conditions}-\ref{condB}
and~\ref{lem:conditions}-\ref{condC}, $\calQ(\vec{T_2}) = IR_2R_3^{-1}$.
Furthermore, if there is at least one $b$ where Condition~2 (resp. 3)
holds but not Condition~1, then $R_2$ (resp. $R_3$) is uniformly
random.  Finally, since $R_2$ and $R_3$ involve different vertices
of $H$, they are independent.

If $T$ satisfies Condition~1 at every $b \in \calV(\vec{H})$,
then $R_2 = R_3 = I$, so $\calQ(\vec{T_1}) = \calQ(\vec{T_2}) = I$,
and $\tr(\calQ(\vec{T_1})) \tr(\calQ(\vec{T_2})) = d^2$.

If $T$ satisfies Condition~1 or~2 at every $b \in \calV(\vec{H})$
but not always Condition~1, then $R_3=I$, and
$\calQ(\vec{T_1}) = \calQ(\vec{T_2}) = R_2$.
With probability $1/d$, $R_2 = \pm I$, in which case
$\tr(\calQ(\vec{T_1})) \tr(\calQ(\vec{T_2})) = d^2$.
If $R_2$ is not $\pm I$, then $\tr(R_2) = 0$, so
$\tr(\calQ(\vec{T_1})) \tr(\calQ(\vec{T_2})) = 0$.
Thus $$E\left(\tr(\calQ(\vec{T_1})) \tr(\calQ(\vec{T_2}))\right) = d.$$

If $T$ satisfies Condition~1 or~3 at every $b \in \calV(\vec{H})$ 
but not always Condition~1, then $\calQ(\vec{T_1}) = R_3$, and
$\calQ(\vec{T_2}) = R_3^{-1}$.   With probability $1/d$,
$R_3 = \pm I$, in which case
$\tr(\calQ(\vec{T_1})) \tr(\calQ(\vec{T_2})) = d^2$.
If $R_3$ is not $\pm I$, then $\tr(R_3) = 0$, so
$\tr(\calQ(\vec{T_1})) \tr(\calQ(\vec{T_2})) = 0$.
Thus $$E\left(\tr(\calQ(\vec{T_1})) \tr(\calQ(\vec{T_2})\right) = d.$$

Next, suppose $\vec{T}$ satisfies Condition~1, 2, or~3 at every
$b \in \calV(\vec{H})$ but not always Condition~1 or~2, and not
always Condition~1 or~3.  Then $\calQ(\vec{T_1}) = R_2 R_3$
and $\calQ(\vec{T_2}) = R_2 R_3^{-1}$.  If either of $R_2 R_3$
or $R_2 R_3^{-1}$ is not $\pm I$, then it has trace~0, in which
case $\tr(\calQ(\vec{T_1})) \tr(\calQ(\vec{T_2})) = 0$.  Thus
we only need to consider the cases where $R_2 R_3$ and
$R_2 R_3^{-1}$ are both $\pm I$; or equivalently, the
case where $R_2 = \pm R_3$ and $R_2^2 = I$.  This happens
with probability $1/d^2$ if $d$ is odd and $2/d^2$ if $d$
is even.  Thus $$E\left(\tr(\calQ(\vec{T_1})) \tr(\calQ(\vec{T_2})\right)$$
is equal to 1 if $d$ is odd, and 2 if $d$ is even.

Finally, suppose that $\vec{T}$ does not satisfy any of
Conditions~1, 2, or~3 at some vertex $b \in \calV(\vec{H})$.
Then by Lemma~\ref{lem:conditions}-\ref{condD}, one of $\calQ(\vec{T_1})$
and $\calQ(\vec{T_2})$ is a uniformly random element of $\calG$,
and is independent of the other.  Thus
$$E\left(\tr(\calQ(\vec{T_1}) \tr(\calQ(\vec{T_2})\right) = 
 E\left(\tr(\calQ(\vec{T_1})\right) E\left(\tr(\calQ(\vec{T_2}))\right) = 0.$$
\end{proof}
\subsection{Bounding the Variance}
\label{sec:bound_var}

As we saw in Sections~\ref{sec:roots_of_unity} and~\ref{sec:diagonal_matrices},
a $2k$-tuple of edges only contributes to the variance if it is
distinctly color-compatible and satisfies Condition~1, 2, or~3 
at every vertex of $H$.  Now we bound the number of $2k$-tuples
with these properties to get a bound on the variance.

Throughout this section, $\vec{T} = (\vec{T_1},\vec{T_2})$
will denote a $2k$-tuple of edges of~$\vec{G}$, where
$\vec{T_1} = \vect{v_1v_2},\dots,\vect{v_{2k-1}v_{2k}}$ and
$\vec{T_2} = \vect{w_1w_2},\dots,\vect{w_{2k-1}w_{2k}}$.
We continue to refer to the edges of $\vec{H}$ as
$\vect{a_1 a_2}, \dots, \vect{a_{2k-1} a_{2k}}$.
In much of this section, edge-directions will be irrelevant
and will often be ignored.  We refer to the two halves
of the edge $\overline{a_{2i-1},a_{2i}}$ as the ``half-edge
at $a_{2i-1}$'' and the ``half-edge at $a_{2i}$,'' and
similarly for the two halves of $\overline{v_{2i-1},v_{2i}}$
and $\overline{w_{2i-1},w_{2i}}$.  Let $K$ denote the undirected
subgraph of $G$ consisting of the $2k$ edges of $\vec{T}$, ignoring
edge-directions.  If $i \in \Gamma(b)$ (so $a_i=b$), then we say
that $v_i$ and $w_i$ {\em lie over} $b$.  Thus, for instance,
Condition~1 is satisfied at some $b \in H$ if and only if all
$v_i$ that lie over $b$ are equal and all $w_i$ that lie over
$b$ are equal.  If $i \in \Gamma(b)$ and $\vec{T}$ satisfies
Condition~1 (resp. 2 or 3) at $b$, then we'll say also that
$\vec{T}$ satisfies Condition 1 (resp. 2 or 3) at $v_i$ and at $w_i$.

Suppose $b$ and $c$ are distinct vertices of $H$, and suppose
$i \in \Gamma(b)$ and $j \in \Gamma(c)$.  The vertices $v_i$
and $v_j$ need not be distinct; however, if they are not distinct,
then $\vec{T_1}$ cannot be distinctly color-compatible (because
that would require that $v_i$ and $v_j$ get different colors).  And
similarly for $w_i$, $w_j$, and $\vec{T_2}$.  Thus if we are given
$\vec{T}$ but we are not yet given the colors of the vertices, then
we will say that $\vec{T}$ is {\em distinctly colorable} if for all
distinct vertices $b,c \in H$, and for all $i \in \Gamma(b)$ and
$j \in \Gamma(c)$, the vertices $v_i$ and $v_j$ are distinct, as
are the vertices $w_i$ and $w_j$ (though $v_i$ and $w_j$ are not
required to be distinct).  If $\vec{T}$ is not distinctly colorable,
then no matter how colors are assigned to vertices, $\vec{T}$ will
not be distinctly color-compatible and therefore will not contribute
to the variance.

We begin with some lemmas.

\begin{lemma}
\label{lem:distinct_cond23}
Suppose $i \in \Gamma(b)$ and $j \notin \Gamma(b)$, where $b$
is some vertex of $H$.  If $\vec{T}$ is distinctly colorable
and Condition~2 or~3 is satisfied at $b$, but not Condition~1,
then neither $v_i$ nor $w_i$ can be equal to either $v_j$ or $w_j$.
\end{lemma}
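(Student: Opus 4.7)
The plan is to reduce the claim to the definition of distinct colorability by extracting the right indices in $\Gamma(b)$. Since $j \notin \Gamma(b)$, there is some vertex $c \neq b$ of $H$ with $j \in \Gamma(c)$, and distinct colorability immediately yields $v_i \neq v_j$ and $w_i \neq w_j$. So the work is just to establish the two cross inequalities $v_i \neq w_j$ and $w_i \neq v_j$, and the tool in both cases will be to find some $\ell \in \Gamma(b)$ such that the vertex in question equals $v_\ell$ (or $w_\ell$), and then invoke distinct colorability on the pair $(\ell, j)$.

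First I would dispose of Condition~2. By definition, $v_\ell = w_\ell$ for every $\ell \in \Gamma(b)$; in particular $v_i = w_i$. Distinct colorability gives $v_i \neq v_j$ and $w_i \neq w_j$, but since $v_i = w_i$ these two inequalities are the same as $w_i \neq v_j$ and $v_i \neq w_j$. So all four required inequalities hold and Condition~2 is done. (The assumption that Condition~1 fails is not even needed here.)

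The substantive case is Condition~3 together with the failure of Condition~1. By the definition of Condition~3, there exist $x,y \in \calV(\vec{G})$ such that each pair $(v_\ell, w_\ell)$ for $\ell \in \Gamma(b)$ is either $(x,y)$ or $(y,x)$. The failure of Condition~1 says the $v_\ell$'s over $b$ are not all equal, which forces $x \neq y$ and also forces both orientations $(x,y)$ and $(y,x)$ to occur: there must be some $\ell_1 \in \Gamma(b)$ with $(v_{\ell_1}, w_{\ell_1}) = (x, y)$ and some $\ell_2 \in \Gamma(b)$ with $(v_{\ell_2}, w_{\ell_2}) = (y, x)$. The key step is now to apply distinct colorability to each of $\ell_1$ and $\ell_2$ against $j$: this gives $x = v_{\ell_1} \neq v_j$, $y = w_{\ell_1} \neq w_j$, $y = v_{\ell_2} \neq v_j$, and $x = w_{\ell_2} \neq w_j$. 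Hence neither $x$ nor $y$ equals $v_j$ or $w_j$. Since $\{v_i, w_i\} \subseteq \{x,y\}$, this yields the required four inequalities.

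No single step is difficult; the only subtlety is the forward-looking observation that under Condition~3, the failure of Condition~1 guarantees that both $x$ and $y$ actually appear in each of the multisets $\{v_\ell : \ell \in \Gamma(b)\}$ and $\{w_\ell : \ell \in \Gamma(b)\}$. Once that is noted, distinct colorability cleans up both cases uniformly, and the lemma follows.
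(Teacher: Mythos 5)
Your proof is correct and follows essentially the same approach as the paper's: disposing of Condition~2 by noting $v_i=w_i$ so the two inequalities from distinct colorability immediately yield all four, and handling Condition~3 by observing that the failure of Condition~1 forces both orderings $(x,y)$ and $(y,x)$ to appear over $b$, then applying distinct colorability to each against $j$. The only cosmetic difference is that you make the index bookkeeping ($\ell_1$, $\ell_2$) a bit more explicit where the paper fixes $v_i=x$, $w_i=y$ without loss of generality.
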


\begin{proof}
If Condition~2 holds at $b$, then $v_i = w_i$.  By the definition
of ``distinctly colorable,'' $v_i \neq v_j$ and $w_i \neq w_j$,
so neither $v_j$ nor $w_j$ can equal $v_i = w_i$.

If instead Condition~3 holds at $b$, then there are two vertices
$x$ and $y$ that lie over $b$ in $K$ such that for all $i' \in
\Gamma(b)$, either $v_{i'} = x$ and $w_{i'} = y$ or vice versa.
We may assume that $v_i=x$ and $w_i=y$.  But since Condition~1
does not hold at $b$, there must also be some $i' \in \Gamma(b)$
such that $v_{i'} = y$ and $w_{i'} = x$.  Since $v_i = x = w_{i'}$,
it follows from the definition of ``distinctly colorable'' that
$v_i$ cannot be equal to either $v_j$ or $w_j$, and similarly
for $w_i$.
\end{proof}

Normally, we refer to the edges of $\vec{H}$ as $\vect{a_1 a_2},
\dots, \vect{a_{2k-1} a_{2k}}$; however, in the next lemma, we
will not be concerned with the directions of the edges, so we
will refer to the edges as $\overline{a_{\alpha} a_{\beta}}$,
with the understanding that for some $1 \leq r \leq k$, either
$\alpha=2r-1$ and $\beta=2r$, or vice versa.

\begin{lemma}
\label{lem:lift_walk}
Suppose $W = \overline{a_{\alpha_1} a_{\beta_1}} \dots
\overline{a_{\alpha_s} a_{\beta_s}}$ is a walk in the undirected graph
$H$, and suppose that Condition~1 or Condition~3 holds at every internal
vertex of the walk (i.e., Condition~1 or~3 holds at each vertex
$a_{\beta_i} = a_{\alpha_{i+1}}$ for $1 \leq i < s$).  Then there
is a walk in $K$ from $v_{\alpha_1}$ to either $v_{\beta_s}$ or
$w_{\beta_s}$, and similarly for $w_{\alpha_1}$.
\end{lemma}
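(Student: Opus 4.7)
The plan is to proceed by induction on the length $s$ of the walk $W$, establishing the claim for $v_{\alpha_1}$; the statement for $w_{\alpha_1}$ will follow by the same argument with the roles of $\vec{T_1}$ and $\vec{T_2}$ interchanged.

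For the base case $s=1$, the walk consists of a single edge $\overline{a_{\alpha_1}a_{\beta_1}}$, which is one of the edges $\vect{a_{2r-1}a_{2r}}$ of $\vec{H}$. The corresponding edge of $\vec{T_1}$ is $\vect{v_{2r-1}v_{2r}}$, and this is an edge of $K$ joining $v_{\alpha_1}$ and $v_{\beta_1}$, giving the required walk.

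For the inductive step, apply the hypothesis to the prefix of length $s-1$ to obtain a walk in $K$ from $v_{\alpha_1}$ to some vertex $u \in \{v_{\beta_{s-1}}, w_{\beta_{s-1}}\}$. At the internal vertex $b = a_{\beta_{s-1}} = a_{\alpha_s}$, Condition~1 or Condition~3 holds. Under Condition~1, all $v_j$ with $j \in \Gamma(b)$ are equal and all $w_j$ are equal, so $v_{\beta_{s-1}} = v_{\alpha_s}$ and $w_{\beta_{s-1}} = w_{\alpha_s}$. Under Condition~3, there exist $x,y$ such that $\{v_j, w_j\} = \{x,y\}$ for every $j \in \Gamma(b)$, and in particular $\{v_{\beta_{s-1}}, w_{\beta_{s-1}}\} = \{v_{\alpha_s}, w_{\alpha_s}\} = \{x,y\}$. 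In either case $u \in \{v_{\alpha_s}, w_{\alpha_s}\}$, so the walk can be extended without leaving $K$: if $u = v_{\alpha_s}$, append the edge of $\vec{T_1}$ between $v_{\alpha_s}$ and $v_{\beta_s}$, and if $u = w_{\alpha_s}$, append the edge of $\vec{T_2}$ between $w_{\alpha_s}$ and $w_{\beta_s}$. The resulting walk ends at an element of $\{v_{\beta_s}, w_{\beta_s}\}$, completing the induction.

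The key point driving the proof is that Conditions~1 and~3 together guarantee that the two-element multiset $\{v_{\beta_{s-1}}, w_{\beta_{s-1}}\}$ coincides with $\{v_{\alpha_s}, w_{\alpha_s}\}$, so passing through an internal vertex of $W$ at most requires switching between the $\vec{T_1}$-copy and the $\vec{T_2}$-copy of the next edge of $H$. There is no real obstacle here beyond bookkeeping of these equalities; note in particular that Condition~2 is \emph{not} among the allowed conditions, which is essential because Condition~2 alone would not let us conclude $v_{\beta_{s-1}} \in \{v_{\alpha_s}, w_{\alpha_s}\}$.
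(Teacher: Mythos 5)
Your proof is correct and follows essentially the same route as the paper's: induction on the walk length $s$, with the inductive step using the observation that Condition~1 or~3 at $b = a_{\beta_{s-1}} = a_{\alpha_s}$ forces $\{v_{\beta_{s-1}},w_{\beta_{s-1}}\} = \{v_{\alpha_s},w_{\alpha_s}\}$, so the partial walk's endpoint can be continued along the lift in either $\vec{T_1}$ or $\vec{T_2}$. The paper handles $w_{\alpha_1}$ by noting the argument is identical rather than by a formal symmetry swap, but that is a cosmetic difference; the substance matches.
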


\begin{proof}
Let $e_i$ denote the $i^{\rm th}$ edge of $W$; in other words,
$e_i = \overline{a_{\alpha_i} a_{\beta_i}}$.  Then $e_i$ has two
``lifts'' in $K$, namely, $\overline{v_{\alpha_i} v_{\beta_i}}$
and $\overline{w_{\alpha_i} w_{\beta_i}}$.  We will show that each
lift of $e_i$ is adjacent to a lift of $e_{i+1}$, so we will be
able to piece together lifts of the $e_i$'s to get a lift of the
entire walk.

We use induction on the length $s$ of $W$.  The proof is the
same for $v_{\alpha_1}$ and $w_{\alpha_1}$, so we present the proof
just for $v_{\alpha_1}$.  If $s=1$, then $\overline{v_{\alpha_1}
v_{\beta_1}}$ is the required walk.  If $s>1$, then by induction, there
is a walk $U$ in $K$ from $v_{\alpha_1}$ to either $v_{\beta_{s-1}}$ or
$w_{\beta_{s-1}}$.  Since $W$ is a walk, the edges $e_{s-1}$ and $e_s$
are adjacent; in particular, the vertices $a_{\beta_{s-1}}$ and
$a_{\alpha_s}$ are equal. Equivalently, there is some vertex $b$
of $H$ such that $\beta_{s-1},\alpha_s \in \Gamma(b)$.  By assumption,
Condition~1 or~3 holds at $b$, so either $v_{\beta_{s-1}} = v_{\alpha_s}$
and $w_{\beta_{s-1}} = w_{\alpha_s}$, or $v_{\beta_{s-1}} = w_{\alpha_s}$
and $w_{\beta_{s-1}} = v_{\alpha_s}$.  Either way, we can append either
the edge $\overline{v_{\alpha_s} v_{\beta_s}}$ or the edge
$\overline{w_{\alpha_s} w_{\beta_s}}$ to $U$, obtaining a walk from
$v_{\alpha_1}$ to either $v_{\beta_s}$ or $w_{\beta_s}$.
\end{proof}

Although $H$ is connected, $K$ need not be.  For instance, $K$ might
consist of two isomporphic copies of $H$.  In that case, each connected
component of $K$ contains a lift of every edge of $H$.  However, it can
also happen that a connected component of $K$ contains lifts of only some
edges of $H$.  The next lemmas involve the connected components of $K$.
We generally use $J$ to denote a connected component of $K$ and use $J'$
to denote the subgraph of $H$ that lies ``below'' $J$.  Note that $H$
is connected, so $J'$ is not generally a connected component of $H$.

\begin{lemma}
\label{lem:condition2}
Suppose that $\vec{T}$ satisfies Condition~1, 2, or~3 at each vertex
of $H$, and suppose it satisfies Condition 2 at some vertex.  Then
for every $i \in \{1,\dots,2k\}$, the two vertices $v_i$ and $w_i$
are in the same connected component of~$K$.  Furthermore, that
component also contains some vertex at which Condition~2
is satisfied.
\end{lemma}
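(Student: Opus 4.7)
The plan is to reduce the claim to Lemma~\ref{lem:lift_walk} by choosing the walk in $H$ carefully so that Condition~2 appears only at its endpoint. Fix $i \in \{1,\dots,2k\}$ and set $b = a_i$, and let $b_0$ be a vertex of $H$ where Condition~2 is satisfied (guaranteed to exist by hypothesis).

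If Condition~2 already holds at $b$, then $v_i = w_i$, so the component of $v_i$ trivially contains $w_i$ and is a component containing a vertex (namely $v_i$ itself) at which Condition~2 is satisfied. Assume therefore that Condition~2 fails at $b$. Since $H$ is connected, there is a walk in $H$ from $b$ to $b_0$; following this walk, let $c^*$ be the \emph{first} vertex after $b$ at which Condition~2 holds (possibly $c^* = b_0$). Let $W = \overline{a_{\alpha_1} a_{\beta_1}} \cdots \overline{a_{\alpha_s} a_{\beta_s}}$ be the walk from $b$ to $c^*$, chosen with $\alpha_1 = i$ (we can do this because $i$ is a half-edge at $b$, so we may take the first edge of $W$ to be the edge of $\vec{H}$ containing~$h_i$ and set $\alpha_1 = i$). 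By minimality of $c^*$, none of the internal vertices of $W$ satisfy Condition~2, so by the standing hypothesis of the lemma each internal vertex satisfies Condition~1 or Condition~3.

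Now apply Lemma~\ref{lem:lift_walk}: there is a walk in $K$ from $v_{\alpha_1} = v_i$ to either $v_{\beta_s}$ or $w_{\beta_s}$, and similarly a walk in $K$ from $w_{\alpha_1} = w_i$ to either $v_{\beta_s}$ or $w_{\beta_s}$. Since $\beta_s \in \Gamma(c^*)$ and Condition~2 holds at $c^*$, we have $v_{\beta_s} = w_{\beta_s}$. Therefore both $v_i$ and $w_i$ are connected in $K$ to this common vertex and hence lie in the same connected component of $K$. This component contains the vertex $v_{\beta_s}$, which lies over $c^*$, and by the convention introduced earlier, Condition~2 is satisfied at $v_{\beta_s}$. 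This proves both assertions.

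The only subtle step is the choice of $W$: we must be able to start the walk at the specific half-edge $h_i$ (so that $v_i$, and not some other lift of $b$, is what Lemma~\ref{lem:lift_walk} gives us a walk from) and to truncate it at the first occurrence of a Condition~2 vertex (so that the hypothesis of Lemma~\ref{lem:lift_walk} is satisfied at internal vertices). I expect this truncation argument to be the main point, but it is not difficult once the setup is right; everything else is immediate from Lemma~\ref{lem:lift_walk} and the definition of Condition~2.
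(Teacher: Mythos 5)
Your proof is correct and follows essentially the same route as the paper's: take a walk in $H$ from $a_i$ to a vertex satisfying Condition~2, truncate at the first such vertex so Lemma~\ref{lem:lift_walk} applies, and conclude by noting $v_{\beta_s}=w_{\beta_s}$ at the endpoint. The only difference is that you dispose of the degenerate case (Condition~2 at $a_i$ itself) up front, which the paper leaves implicit; this is a reasonable cleanup, not a different approach.
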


\begin{proof}
$H$ is connected, so there is a walk that starts with
the half-edge at $a_i$ and ends at a vertex where Condition~2 holds.
We can choose a minimal such walk, in which case it has no internal
vertices where Condition~2 holds.  Suppose the walk ends with the
half-edge at $a_j$.  By Lemma~\ref{lem:lift_walk}, there is a walk
in $K$ from $v_i$ to either $v_j$ or $w_j$, and also a walk from
$w_i$ to either $v_j$ or $w_j$.  But Condition~2 holds at $a_j$,
so $v_j = w_j$.  Thus there is a walk from $v_i$ to $v_j$, and
one from $w_i$ to~$v_j$.  Concatenating them gives a walk from
$v_i$ to $w_i$.  Thus $v_i$ and $w_i$ are in the same component,
and are also in the same component as the vertex $v_j$, at which
Condition~2 holds.
\end{proof}

\begin{lemma}
\label{lem:Kcomp}
Suppose $\vec{T}$ satisfies Condition~1, 2, or~3 at every vertex
of $H$ and satisfies Condition 2 at some vertex of $H$.   Let $J$
be any connected component of $K$, and define $J'$ to be the subgraph
of $H$ consisting of all edges $\overline{a_{2i-1}a_{2i}}$ for which
either $\overline{v_{2i-1}v_{2i}}$ or $\overline{w_{2i-1}w_{2i}}$
is in $J$.  Then $J'$ must contain either
\begin{itemize}
  \item at least two vertices where $\vec{T}$ satisfies Condition~2;
  \item a vertex with degree at least 2 in $J'$ and where $\vec{T}$
    satisfies Condition~2;
  \item a vertex with degree at least 3 in $H$ and where $\vec{T}$ 
    satisfies Condition~1 or~3.
\end{itemize}
\end{lemma}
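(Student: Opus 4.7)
The plan is to assume for contradiction that all three bulleted alternatives fail for $J'$, and then to derive a parity contradiction on the degree sum of $J'$. First, by Lemma~\ref{lem:condition2}, the component $J$ contains at least one vertex of $G$ at which $\vec{T}$ satisfies Condition~2, so the corresponding vertex $b_0 \in \calV(H)$ lies in $\calV(J')$. Negating the first bullet, $b_0$ is the unique Condition-2 vertex of $J'$; negating the second bullet, $\deg_{J'}(b_0) \leq 1$; and since $b_0$ is a vertex of the edge-induced subgraph $J'$, we must have $\deg_{J'}(b_0) = 1$. Every other vertex $b$ of $J'$ then satisfies Condition~1 or Condition~3, and negating the third bullet gives $\deg_H(b) \leq 2$; since $H$ has no leaves, $\deg_H(b) = 2$.

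The hard part will be the claim that every such non-$b_0$ vertex of $J'$ in fact has $\deg_{J'}(b) = 2$. To prove it, I would pick any edge of $J'$ incident to $b$, lift it into $J$, and note that its endpoint over $b$ --- say $v_i$ for some $i \in \Gamma(b)$ --- lies in $\calV(J)$, so every edge of $K$ incident to that vertex belongs to $J$. Under Condition~1 at $b$, all $v$-half-edges over $b$ share the endpoint $v_i$, so the $v$-lift of the other edge of $H$ at $b$ (which exists because $\deg_H(b) = 2$) is also incident to $v_i$ and therefore belongs to $J$, giving a second edge of $J'$ at $b$. Under Condition~3 at $b$, the endpoint $v_i$ is simultaneously a $v$-endpoint of one edge at $b$ and a $w$-endpoint of the other, so the $w$-lift of the second edge is incident to $v_i$ and is in $J$ instead. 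Either way, $\deg_{J'}(b) = 2$.

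Combining these degree computations,
$$2|\calE(J')| \;=\; \sum_{b \in \calV(J')} \deg_{J'}(b) \;=\; 1 + 2\bigl(|\calV(J')| - 1\bigr) \;=\; 2|\calV(J')| - 1,$$
which is odd, a contradiction. The main obstacle I foresee is the case analysis in the middle paragraph: the initial lift could come from either $\vec{T_1}$ or $\vec{T_2}$, and under Condition~3 one must check both possible pairings $(v_i, w_i) \in \{(x,y),(y,x)\}$ as $i$ ranges over $\Gamma(b)$. These are all handled by symmetry, but the writeup has to keep the indices straight to confirm that both edges of $H$ at $b$ really end up in $J'$ in every subcase.
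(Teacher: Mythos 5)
Your proof is correct and takes essentially the same approach as the paper's: both rely on Lemma~\ref{lem:condition2}, on the key observation that a vertex of $J'$ at which Condition~1 or~3 holds has the same degree in $J'$ as in $H$ (your ``hard part''), and on the Handshaking Lemma. You merely package the argument as a proof by contradiction---negating all three bullets forces the degree sum of $J'$ to be odd---whereas the paper argues directly, using handshaking to produce a second odd-degree vertex of $J'$ and then concluding that it must have degree at least~$3$ in $H$.
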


\begin{proof}
We assumed there is some vertex of $H$ where Condition 2 is satisfied,
so by Lemma~\ref{lem:condition2}, $J$ contains such a vertex, and so
then does $J'$.  If $J'$ contains two such vertices, then we are done,
so assume there is just one.  If that one vertex has degree at least 2
in $J'$, then again we are done, so assume it has degree~1.  By the
Handshaking Lemma, there must be another vertex with odd degree in~$J'$;
and $\vec{T}$ must satisfy either Condition~1 or~3 at that vertex.

If $b$ is any vertex in $J'$, then for some $i \in \Gamma(b)$, the half-edge
at $a_i$ is in~$J'$, so either the half-edge at $v_i$ or the half-edge at
$w_i$ is in~$J$.  If in addition $\vec{T}$ satisfies either Condition~1
or~3 at $b$, then (by the definition of Conditions~1 and~3), for every
$i \in \Gamma(b)$, either the half-edge at $v_i$ or the half-edge at
$w_i$ is in~$J$.  Therefore, for every $i \in \Gamma(b)$, the half-edge
at $a_i$ is in $J'$.  In other words, $b$ has the same degree in $J'$
as in $H$.  We saw in the previous paragraph that some vertex satisfies
either Condition~1 or~3 and has odd degree in $J'$.  It has the same
degree in $H$.  But we assumed that $H$ has no leaves, so it must
have degree at least 3 in $H$.
\end{proof}

\begin{lemma}
\label{lem:degree_cond2}
Suppose $\vec{T}$ satisfies Condition 1, 2, or 3 at every vertex
of $H$ and satisfies Condition 2 at some vertex $b$ of $H$.  Let
$\delta$ denote the degree of $b$ in~$H$.  For each connected
component $J$ of $K$, define $J'$ as in the previous lemma.
Suppose there are $\kappa$ components $J_1,\dots,J_{\kappa}$ of
$K$ that satisfy:
\begin{itemize}
  \item $J_i$ has at most one vertex where Condition~2 holds, and
  \item $b$ has degree at least two in $J_i'$.
\end{itemize}
Then at most $\delta - \kappa$ distinct vertices of $K$ lie over $b$.
\end{lemma}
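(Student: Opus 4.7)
The plan is to count the distinct vertices of $K$ lying over $b$ by partitioning them according to which connected component of $K$ they belong to, and then to show that each of the $\kappa$ distinguished components contributes at least one fewer vertex than the edge-degree $d_J$ of $b$ in $J'$ would naively suggest. Specifically, since Condition~2 holds at $b$, I would first observe that $v_i = w_i$ for every $i \in \Gamma(b)$, so the set of vertices of $K$ lying over $b$ is exactly $V_b := \{v_i : i \in \Gamma(b)\}$, which has cardinality at most $\delta$. Each vertex in $V_b$ lies in exactly one connected component of $K$, so I obtain a partition $V_b = \bigsqcup_J V_b(J)$ where $V_b(J) := V_b \cap J$.

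The next step is to relate the degree $d_J$ of $b$ in $J'$ to the indices $i \in \Gamma(b)$ with $v_i \in J$. The key subtlety, and the one place the argument could derail, is that for any edge of $H$ incident to $b$, the two lifts $\overline{v_{2r-1}v_{2r}}$ and $\overline{w_{2r-1}w_{2r}}$ share the endpoint $v_i = w_i$ (where $i \in \{2r-1,2r\}$ is the half-edge at $b$), so both lifts necessarily lie in the same component of $K$, namely, the one containing $v_i$. It follows that each edge of $H$ incident to $b$ lies in $J'$ for exactly one $J$, that $d_J = |\{i \in \Gamma(b) : v_i \in J\}|$, and that $\sum_J d_J = \delta$. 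In particular, the trivial bound $|V_b(J)| \leq d_J$ holds for every component.

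To finish, for each distinguished component $J_\ell$ I would establish the stronger bound $|V_b(J_\ell)| \leq 1$: every element of $V_b(J_\ell)$ is a vertex of $J_\ell$ that lies over $b$, and hence (by the convention introduced at the start of the section) a vertex of $J_\ell$ at which Condition~2 holds; by hypothesis $J_\ell$ contains at most one such vertex. Combining these bounds then gives
$$|V_b| \;=\; \sum_{\ell=1}^{\kappa} |V_b(J_\ell)| + \sum_{J \notin \{J_1,\dots,J_\kappa\}} |V_b(J)| \;\leq\; \kappa + \Big(\delta - \sum_{\ell=1}^{\kappa} d_{J_\ell}\Big) \;\leq\; \delta - \kappa,$$
where the final inequality uses the hypothesis $d_{J_\ell} \geq 2$. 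I don't anticipate any serious obstacle beyond verifying the ``same-component'' observation about lifts; once that is in hand, the counting is essentially bookkeeping.
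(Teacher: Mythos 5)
Your proof is correct and uses the same essential ingredients as the paper's: the observation that Condition~2 at $b$ forces $v_i=w_i$ so the vertices over $b$ are $\{v_i : i\in\Gamma(b)\}$, and that for each distinguished $J_\ell$, degree~$\geq 2$ together with the ``at most one Condition~2 vertex'' hypothesis forces a collapse. The only difference is presentational: you partition $V_b$ by component and bound each piece ($|V_b(J_\ell)|\leq 1$ for distinguished components, $|V_b(J)|\leq d_J$ otherwise), whereas the paper argues that each distinguished $J_\ell$ yields a pair of coinciding indices; your version of the bookkeeping, together with the explicit observation that the two lifts of an edge at $b$ share the endpoint $v_i=w_i$ and hence lie in the same component (which the paper leaves implicit), is somewhat cleaner but not a genuinely different route.
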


\begin{proof}
Suppose $\Gamma(b) = \{s_1, \dots, s_{\delta}\}$, so
the vertices $a_{s_1}, \dots, a_{s_{\delta}}$ are all
equal to~$b$.  The vertices that lie above $b$ in $K$ are
$v_{s_1}, \dots, v_{s_{\delta}}$ and
$w_{s_1}, \dots, w_{s_{\delta}}$.  Since Condition~2
holds at $b$, $v_{s_j} = w_{s_j}$ for each $j$, so in
fact the vertices that lie above $b$ in $K$ are just
$v_{s_1}, \dots, v_{s_{\delta}}$.  Consider any
$J_i'$ as defined in the lemma.  Since $b$ has degree
at least two in $J_i'$, at least two of the half-edges
at $a_{s_1}, \dots, a_{s_{\delta}}$ are in $J_i'$.
Assume without loss of generality that the half-edges
at $a_{s_1}$ and $a_{s_2}$ are in $J_i'$.  Then $J_i$
contains the half-edge at either $v_{s_1}$ or $w_{s_1}$
and the half-edge at either $v_{s_2}$ or $w_{s_2}$.
Since $v_{s_1} = w_{s_1}$ and $v_{s_2} = w_{s_2}$,
$J_i$ contains both $v_{s_1}$ and $v_{s_2}$.  But
$J_i$ has at most one vertex where Condition~2 holds,
so $v_{s_1}$ and $v_{s_2}$ must be the same vertex.
Thus each of $J_1,\dots,J_{\kappa}$ contains two of
$v_{s_1}, \dots, v_{s_{\delta}}$ that are equal,
so there can be at most $\delta - \kappa$ that are distinct.
\end{proof}

\begin{lemma}
\label{lem:mCDelta}
Suppose $0 < \Delta \leq m^{1/2 - \alpha}$, where $\alpha > 0$,
and suppose $0 < C \leq \min(m^{2\alpha},m^{1/3})$.  Then
$\Delta^2 \leq m/C$ and $\Delta \leq m/C^2$.
\end{lemma}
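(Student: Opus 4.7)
This lemma is pure arithmetic with exponents, so the plan is just to track the exponents carefully and split into the two regimes determined by which of $m^{2\alpha}$ and $m^{1/3}$ is smaller.

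For the first inequality $\Delta^2 \le m/C$, I would not need to split into cases. Squaring the hypothesis gives $\Delta^2 \le m^{1-2\alpha}$, and the hypothesis $C \le m^{2\alpha}$ gives $m/C \ge m^{1-2\alpha}$. Chaining these two bounds immediately yields $\Delta^2 \le m/C$.

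For the second inequality $\Delta \le m/C^2$, I would split on whether $\min(m^{2\alpha},m^{1/3})$ equals $m^{2\alpha}$ or $m^{1/3}$, which is governed by the threshold $\alpha = 1/6$. In the first regime, $\alpha \le 1/6$, I would use $C \le m^{2\alpha}$ so that $C^2 \le m^{4\alpha}$ and hence $m/C^2 \ge m^{1-4\alpha}$; the inequality $\Delta \le m/C^2$ then reduces to $m^{1/2-\alpha} \le m^{1-4\alpha}$, i.e., to $3\alpha \le 1/2$, which is exactly the regime hypothesis. In the second regime, $\alpha \ge 1/6$, I would instead use $C \le m^{1/3}$, so $C^2 \le m^{2/3}$ and $m/C^2 \ge m^{1/3}$; the inequality then reduces to $m^{1/2-\alpha} \le m^{1/3}$, i.e., to $\alpha \ge 1/6$, again matching the regime.

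There is no real obstacle here: the content of the lemma is simply that the two competing bounds on $C$ (from $m^{2\alpha}$ and from $m^{1/3}$) have been chosen so that each of the two target inequalities holds in exactly one of the two regimes. Taking the minimum in the definition of $C$ is what makes both inequalities hold simultaneously, and writing the argument out cleanly only requires verifying the two exponent comparisons above.
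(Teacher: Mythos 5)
Your proof is correct and follows essentially the same route as the paper: the first inequality is obtained directly from $C \le m^{2\alpha}$, and the second by splitting on the threshold $\alpha = 1/6$ and using whichever bound on $C$ is active in that regime. The only cosmetic difference is that you bound $m/C$ and $m/C^2$ from below rather than bounding $\Delta^2 C$ and $\Delta C^2$ from above, which is the same calculation rearranged.
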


\begin{proof}
The first inequality follows from
$\Delta^2 C \leq m^{1 - 2\alpha} m^{2\alpha} \leq m$.
For the second inequality, if $\alpha \geq 1/6$, then
both $C$ and $\Delta$ are at most $m^{1/3}$, so
$\Delta C^2 \leq m$; if instead $\alpha \leq 1/6$, then
$\Delta C^2 \leq  m^{1/2 - \alpha}(m^{2\alpha})^2 = m^{1/2 + 3\alpha} \leq m$.
\end{proof}

\begin{theorem}
\label{thm:count123}
Suppose that the maximum degree $\Delta$ of any vertex in $G$ is
at most $m^{1/2 - \alpha}$, where $\alpha > 0$, and assume
$C \leq \min(m^{2\alpha},m^{1/3})$.  Then the expected number
of distinctly color-compatible $2k$-tuples of edges of $G$
that satisfy either Condition 1, 2, or 3 at every vertex of $H$,
and satisfy Condition 2 at some vertex of $H$ is
$O(m^k/C^{2k-t})$.
\end{theorem}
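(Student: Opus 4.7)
The plan is to enumerate the $2k$-tuples $\vec{T}$ by their combinatorial type—specifying at each vertex of $H$ which of Conditions~1--3 hold and which of the vertices $v_i, w_i$ coincide—and then bound each type's contribution separately. Since $H$ is fixed, only $O(1)$ types arise, so it suffices to show each one contributes $O(m^k/C^{2k-t})$.

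For a fixed type, let $V$ denote the number of distinct vertices in $K$ and $c$ the number of connected components of $K$. The number of tuples realizing this type is at most $O(m^c \Delta^{V-2c})$: within each component, one picks a starting edge in $O(m)$ ways, then extends to the remaining vertices one at a time using the max-degree bound. The probability (over the random coloring $\calC$) that a realization is distinctly color-compatible is $O(C^{t-V})$, since the $V$ distinct vertices partition into $t$ color classes (one per vertex of $H$) that must receive a single color each with the $t$ colors all distinct. Thus each type contributes at most $O(m^c \Delta^{V-2c} C^{t-V})$, and the task reduces to the inequality
$$m^c \Delta^{V-2c} C^{2k-V} \leq O(m^k).$$

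The structural lemmas~\ref{lem:Kcomp} and~\ref{lem:degree_cond2} convert the hypothesis that Condition~2 holds at some vertex of $H$ into constraints on each component $J$ of $K$: $J$ must contain either two Condition~2 vertices, a Condition~2 vertex of degree~$\geq 2$ in $J'$, or a Condition~1 or~3 vertex of degree~$\geq 3$ in $H$. In the first two cases, Lemma~\ref{lem:degree_cond2} forces vertex-identifications over a Condition~2 vertex, strictly reducing $V$ below its generic value. In the third case, the high-degree vertex forces extra edges into $J$, effectively reducing the exponent $V-2c$ on $\Delta$. Either way, each of the $c$ components provides at least one ``saving'' that can be converted via Lemma~\ref{lem:mCDelta} into $C$-factors, using $\Delta^2 \leq m/C$ for vertex-type savings and $\Delta \leq m/C^2$ for edge-type savings.

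The hard part will be the case analysis required to verify the inequality uniformly across types: one must account, per component, for how many $C$-factors are produced by the structural constraints versus how many are needed to cancel the $C^{2k-V}$ term. Because $H$ has no leaves, the savings per component can be bounded below uniformly, and summing across the $c$ components provides enough $C$-factors to close the inequality $m^c \Delta^{V-2c} C^{2k-V} \leq O(m^k)$, which yields the stated bound $O(m^k/C^{2k-t})$ on the expected count.
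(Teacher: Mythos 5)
Your reduction is correct and is essentially the right strategy: for a fixed isomorphism type of $K$ with $V$ distinct vertices and $c$ connected components, the count of ways to place $\vec{T}$ is at most $m^c\Delta^{V-2c}$, the probability of being distinctly color-compatible is at most $C^{t-V}$, and the theorem reduces to the inequality
$$m^c\,\Delta^{V-2c}\,C^{2k-V} \le O(m^k)\,.$$
However, the proposal does not actually prove this inequality; the final two paragraphs only gesture at ``savings per component'' without doing the bookkeeping. As written, this is a gap, not a proof. Moreover, the intuition behind the sketch misidentifies where the difficulty lies. You do not need a delicate per-component accounting via Lemma~\ref{lem:Kcomp} and Lemma~\ref{lem:degree_cond2}. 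The inequality follows from two global facts: (a)~$V\le 2k$, because over each vertex $b$ of $H$ there are at most $\delta(b)$ distinct lifted vertices (Condition~1 or~3 gives at most $2\le\delta(b)$ by the no-leaves assumption, and Condition~2 gives at most $\delta(b)$), so $V\le\sum_b\delta(b)=2k$; and (b)~$c\le k$, because by Lemma~\ref{lem:condition2} (using that Condition~2 holds somewhere) the two lifts of each edge of $H$ lie in the same component of $K$, so the $k$ edges of $H$ surject onto the components of $K$. Then substituting $\Delta^2\le m/C$ from Lemma~\ref{lem:mCDelta} gives
$$m^c\,\Delta^{V-2c}\,C^{2k-V}\;\le\;m^{V/2}\,C^{\,2k+c-3V/2}\,,$$
which is at most $m^k$ whenever $2k+c-3V/2\le 0$ (using $V\le 2k$), and also when $2k+c-3V/2>0$ (using $C\le m^{1/3}$ to get the bound $m^{(2k+c)/3}\le m^k$, via $c\le k$). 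This closes your inequality cleanly and is, incidentally, a shorter route than the paper's own proof, which instead designates ``first two vertices'' per component using the classification in Lemma~\ref{lem:Kcomp} and does a per-vertex-of-$H$ analysis with a balancing argument between $\sqrt{m}/C$ and $C/\sqrt{m}$ factors. So: your high-level decomposition matches the paper, and your target inequality is the right one, but you must actually establish it — and the cleanest way to do so is via the bounds $V\le 2k$ and $c\le k$ above, not the per-component case analysis you sketch.
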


\begin{proof}
Let $\vec{T}$ and $K$ be as defined above.  There are $O(1)$
possibilities for the isomorphism class of $K$ (i.e., which of
the vertices $v_1,\dots,v_{2k},w_1,\dots,w_{2k}$ are the same),
so it suffices to prove the theorem for an arbitrary isomorphism
class.  Consider then any one such class.  We may assume that it
is distinctly colorable.

The expected number of possibilities for $\vec{T}$ can be computed in
two steps: first count the number of ways to select the vertices of
$\vec{T}$ where colors are ignored, and then find the probability that
when colors are assigned, $\vec{T}$ becomes distinctly color-compatible.
(When we say ``select the vertices of $\vec{T}$,'' we mean, choose a
vertex of $G$ for each $v_i$ and $w_i$ so that the resulting $\vec{T}$
has the assumed isomorphism class.)
To count the number of ways to select the vertices for $\vec{T}$, we
consider one connected component of $K$ at a time.  Let $J$ be some
connected component of $K$.  We can arbitrarily designate any one edge
of $J$ to be the ``first edge.''  Once we designate the first edge,
there are at most $m$ ways to select its two endpoints (since $\vec{G}$
has $m$ edges).  There are then at most $\Delta$ ways to select each
subsequent vertex of $J$, for a total of $m \Delta^{|\calV(J)|-2}$.
Equivalently, we could have arbitrarily
designated any two (not necessarily adjacent) vertices of $J$ to be
the ``first two vertices;'' we could have then pretended that there
were at most $\sqrt{m}$ ways to select each of those two vertices
and at most $\Delta$ ways to select each other vertex of $J$.

For a component $J$, we use the following method to decide which
will be its first two vertices.  Let $J'$ be the subgraph of
$H$ consisting of all edges $\overline{a_{2i-1}a_{2i}}$ for which
either $\overline{v_{2i-1}v_{2i}}$ or $\overline{w_{2i-1}w_{2i}}$
is in $J$ (as in Lemma~\ref{lem:Kcomp}). By Lemma~\ref{lem:condition2},
$J$ has at least one vertex where Condition~2 is satisfied.  We'll
designate that as one of the first two vertices of $J$.  If
there is a second such vertex, then we'll designate it as the other.
If not, if some vertex of $J$ lies above a vertex of $H$ that has
degree at least 3, and where Condition~1 or~3 is satisfied,
then we'll designate that as the other.  Otherwise, we'll
designate any vertex as the other.

We now show that the result is at most $m^k/C^{2k-t}$.  We consider
one vertex~$b$ of $H$ at a time, and compute the factor that the vertices
that lie above $b$ contribute to the result.  Recall that if some vertex
above $b$ was designated as one of the first two vertices in its component,
then it contributes a factor of $\sqrt{m}$, and otherwise contributes
a factor of $\Delta$.  Furthermore, if Condition~2 holds at $b$, and
if there are $d$ vertices that lie above $b$, then they must all receive
the same color, which introduces a factor of $C^{1-d}$.  Note also
that by Lemma~\ref{lem:distinct_cond23}, if $b_i$ and $b_j$ are
two vertices of $H$ where Condition~2 holds, then all the vertices
that lie above $b_i$ are distinct from all the vertices that lie
above $b_j$, and so these factors of $C^{1-d}$ are all independent.
We consider four cases for $b$.  In the first case, Condition~2 holds
at~$b$.  In the other three cases, Condition~1 or~3 holds at~$b$, but
we subdivide these cases based on whether some vertex that lies above~$b$
was designated as a first vertex of its component, and whether~$b$ has
degree $>2$.

First consider the case where Condition~2 holds at $b$.  Let $d$
denote the number of vertices of $K$ that lie above $b$.  There are
at most $\sqrt{m}$ ways to select each of these $d$ vertices, and
they must all receive the same color, so these vertices contribute at most
a factor of $m^{d/2} / C^{d-1}$ to the count.  By Lemma~\ref{lem:degree_cond2},
$d$ is at most $\delta - \kappa$, where $\delta$ is the degree
of $b$ in $H$, and $\kappa$ is the number of connected components
$J$ of $K$ that satisfy: $J$ has at most one vertex where Condition~2
holds, and $b$ has degree at least two in~$J'$.  Thus the contribution
of $b$ to the overall expected value is at most a factor of
\begin{equation}
\label{eq:count_cond2}
\frac{m^{d/2}}{C^{d-1}} \leq
 \frac{m^{(\delta-\kappa)/2}}{C^{\delta-\kappa-1}} =
\left( \frac{m^{\delta/2}}{C^{\delta-1}} \right)
\left( \frac{C}{m^{1/2}}   \right)^{\kappa} \, .
\end{equation}

For the next three cases, suppose that either Condition~1 or~3 holds
at $b$.  Then at most two vertices of $K$ lie above $b$, and by
Lemma~\ref{lem:condition2}, they lie in the same component of $K$.
In the case where neither was designated as one of the first
two vertices of that component, their contribution to the count
is at most a factor of
\begin{equation}
\label{eq:count_cond13_nofirst}
\Delta^2 \leq \frac{m}{C}
         \leq \frac{m}{C} \left(\frac{m^{1/2}}{C} \right)^{\delta-2}
           =  \frac{m^{\delta/2}}{C^{\delta-1}} \, .
\end{equation}
(We used Lemma~\ref{lem:mCDelta} in the first inequality.)

For the last two cases, suppose that one of the vertices that lie above
$b$ {\em was} designated as one of the first two vertices of the component.
First assume $\delta \geq 3$.  Then the contribution of the vertices that
lie above $b$ to the overall expected value is at most a factor of
\begin{equation}
\label{eq:count_cond13_first_3}
m^{1/2} \Delta  \leq \frac{m^{3/2}}{C^2}
                  =  C \left( \frac{m^{1/2}}{C} \right)^3
                \leq C \left(\frac{m^{1/2}}{C} \right)^{\delta}
                  =  \frac{m^{\delta/2}}{C^{\delta-1}} \, .
\end{equation}
(We used Lemma~\ref{lem:mCDelta} in the first inequality.)

Finally, suppose that one of the vertices that lie above $b$ {\em was}
designated as one of the first two vertices of the component, and
$\delta < 3$ (which means that $\delta=2$).  Then the contribution
of the vertices that lie above $b$ to the overall expected value is at
most a factor of
\begin{equation}
\label{eq:count_cond13_first_2}
m^{1/2} \Delta  = \left( \frac{m}{C} \right)
                  \left( \frac{\Delta C}{m^{1/2}} \right)
                = \left( \frac{m^{\delta/2}}{C^{\delta-1}} \right)
                  \left( \frac{\Delta C}{m^{1/2}} \right)
             \leq \left( \frac{m^{\delta/2}}{C^{\delta-1}} \right)
                  \left( \frac{m^{1/2}}{C} \right) \, .
\end{equation}
(We used Lemma~\ref{lem:mCDelta} in the last inequality.)

Observe that in all four cases (Equations~\eqref{eq:count_cond2},
\eqref{eq:count_cond13_nofirst}, \eqref{eq:count_cond13_first_3},
and \eqref{eq:count_cond13_first_2}), the vertex $b$ contributed
a factor of $m^{\delta/2} / C^{\delta-1}$, except that
in \eqref{eq:count_cond2} and \eqref{eq:count_cond13_first_2},
there are additional factors of $\sqrt{m}/C$ or $C/\sqrt{m}$.
We first show that there are at least as many factors of $C/\sqrt{m}$
as $\sqrt{m}/C$.  For the remainder of the proof, we use $\delta(b)$
rather than $\delta$ to denote the degree of $b$, since $b$ will no
longer be clear from context.  There is one factor of $\sqrt{m}/C$
in Equation~\eqref{eq:count_cond13_first_2} for each vertex
$b$ and component $J$ such that:
\begin{itemize}
  \item $\delta(b) < 3$,
  \item $b$ satisfies Condition~1 or~3, and
  \item a vertex $x$ that lies above $b$ was designated as one of the
    first two vertices of $J$.
\end{itemize}
Observe that $J$ can have at most one vertex that satisfies Condition~1
or~3 and was designated as one of the first two vertices, so
$J$ cannot contribute a factor of $\sqrt{m}/C$ for any vertex
besides $b$.  In other words, $J$ contributes at most one factor
of $\sqrt{m}/C$ overall.  Now we'll show that $J$ also
contributes a factor of $C/\sqrt{m}$ to~\eqref{eq:count_cond2}.
Since $x$ was designated as one of the first two vertices
of $J$, we know that $J$ has only one vertex where Condition~2
holds (which, by Lemma~\ref{lem:distinct_cond23}, implies
that $J'$ has only one vertex where Condition~2 holds),
and $J$ cannot have a vertex that lies above a vertex with
degree at least 3 in $H$ and where Condition~1 or~3 holds.  Thus
by Lemma~\ref{lem:Kcomp}, $J'$ must have a vertex with degree at
least 2 (in $J'$) where Condition~2 holds.  Then for that vertex,
$J$ contributes a factor of $C/\sqrt{m}$ to~\eqref{eq:count_cond2}.
Thus there must be at least as many factors of $C/\sqrt{m}$
in~\eqref{eq:count_cond2} as there are factors of $\sqrt{m}/C$
in~\eqref{eq:count_cond13_first_2}.  We can therefore ignore
all such factors; this can only increase the product.  When
we ignore these factors, each vertex $b$ of $H$ contributes
a factor of at most $m^{\delta(b)/2} / C^{\delta(b)-1}$.
Taking the product over $b$ gives
$$ \frac{m^{\sum_b \delta(b)/2}}{C^{\sum_b (\delta(b)-1)}}
                                      = \frac{m^k}{C^{2k-t}} \, .$$
\end{proof}

Next, we consider what happens when no vertex of $\vec{T}$ satisfies
Condition~2.

\begin{lemma}
\label{lem:2comps}
Suppose $\vec{T}$ satisfies Condition~1 or~3 at every vertex
of $H$.  Then $K$ has at most two connected components.
\end{lemma}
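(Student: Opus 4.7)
The plan is to treat the natural projection $\pi\colon \calV(K)\to\calV(H)$ that sends each lift $v_i$ and $w_i$ to $a_i$ as (essentially) a double cover. Under the hypothesis, Condition~3 holds at every $b\in\calV(H)$ (since Condition~1 is a special case), so each fiber
$$\pi^{-1}(b)=\{v_i:i\in\Gamma(b)\}\cup\{w_i:i\in\Gamma(b)\}$$
consists of at most two vertices of $K$; call them $x_b$ and $y_b$, possibly equal.

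The key local observation I would prove is that for every edge $\overline{a_{2r-1}a_{2r}}$ of $H$, the multiset $\{v_{2r-1},w_{2r-1}\}$ equals $\{x_{a_{2r-1}},y_{a_{2r-1}}\}$, and similarly at the $a_{2r}$-end; this is immediate from the statement of Condition~3 (with the degenerate case $x_b=y_b$ handled trivially). Consequently, each vertex of the fiber $\pi^{-1}(b)$ is an endpoint in $K$ of at least one of the two lifts $\overline{v_{2r-1}v_{2r}}$ and $\overline{w_{2r-1}w_{2r}}$ of every edge of $H$ incident to $b$.

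From this local fact I would deduce, by a single-edge walk argument, that any connected component $J$ of $K$ meeting $\pi^{-1}(b)$ must also meet $\pi^{-1}(c)$ for every neighbor $c$ of $b$ in $H$: pick $u\in J\cap\pi^{-1}(b)$, use the local observation to find a lift of $\overline{bc}$ incident to $u$, and follow it to a vertex of $\pi^{-1}(c)\cap J$. Iterating along paths and using the connectedness of $H$, every component of $K$ meets $\pi^{-1}(b)$ for every $b\in\calV(H)$. Specializing to $b=a_1$ and noting that $\pi^{-1}(a_1)$ contains at most two vertices yields the bound of two components.

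No step presents a genuine obstacle; the only mild subtlety is the degenerate case in which a fiber $\pi^{-1}(b)$ collapses to a single vertex (because $x_b=y_b$, or because several lifts coincide), but this only shrinks the fiber and leaves every statement in the argument trivially valid.
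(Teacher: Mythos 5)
Your proof is correct and takes essentially the same route as the paper's. The paper applies Lemma~\ref{lem:lift_walk} (lifting a walk in $H$ from $a_i$ to $a_1$) to conclude that every vertex of $K$ lies in the same component as $v_1$ or $w_1$; your ``single-edge walk argument iterated along paths'' re-derives exactly the content of that lemma inline, phrased in terms of fibers $\pi^{-1}(b)$, before specializing to the fiber over $a_1$, which has at most two vertices.
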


\begin{proof}
$H$ is connected, so for any $i$, there is a walk from $a_i$ to $a_1$.
Since Condition~1 or~3 holds at every vertex along the walk, we can apply
Lemma~\ref{lem:lift_walk} to deduce that there is a walk in $K$ from
$v_i$ to either $v_1$ or $w_1$ and also a walk from $w_i$ to either
$v_1$ or $w_1$.  Thus every vertex of $K$ is in the same connected
component as either $v_1$ or $w_1$.
\end{proof}

\begin{lemma}
\label{lem:3cases}
Suppose $\vec{T}$ satisfies Condition~1 or~3 at every vertex of $H$, but
not always Condition~1.  Then at least one of the following must hold:
\begin{itemize}
  \item $H$ has more edges than vertices;
  \item there are at least two vertices of $H$ where
    $\vec{T}$ does not satisfy Condition~1;
  \item $K$ is connected (as an undirected graph).
\end{itemize}
\end{lemma}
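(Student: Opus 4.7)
The plan is to prove the contrapositive: assume the first two bullets fail, and deduce that $K$ is connected.

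Since $H$ is connected and has no leaves, every vertex has degree at least~$2$, so $2k\geq 2t$. If the first bullet fails, then also $k\leq t$, so in fact $k=t$ and every vertex of $H$ has degree exactly~$2$; thus $H$ is a cycle. If the second bullet fails, then, together with the hypothesis that Condition~$1$ fails somewhere, there is exactly one vertex $b$ of $H$ at which Condition~$3$ holds but Condition~$1$ does not. Since $\deg(b)=2$, write $\Gamma(b)=\{i_1,i_2\}$; Condition~$3$ without Condition~$1$ at $b$ forces (relabeling $x,y$ if necessary) $(v_{i_1},w_{i_1})=(x,y)$ and $(v_{i_2},w_{i_2})=(y,x)$ for some distinct $x,y\in\calV(\vec G)$.

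At every vertex $c\ne b$ of the cycle, Condition~$1$ holds, so there are well-defined vertices $v_c^*,w_c^*\in\calV(\vec G)$ with $v_j=v_c^*$ and $w_j=w_c^*$ for all $j\in\Gamma(c)$. Traversing the cycle $H$ once, starting from the half-edge $h_{i_1}$ at $b$ and ending at the half-edge $h_{i_2}$ at $b$, the sequence of $v$-lifts of the traversed edges forms a walk in $K$ from $v_{i_1}=x$, through the intermediate $v_c^*$'s in cyclic order, to $v_{i_2}=y$. Similarly, the $w$-lifts form a walk in $K$ from $y$ to $x$. Every vertex of $K$ lies on one of these two walks, so every vertex of $K$ is connected in $K$ to either $x$ or $y$; but $x$ and $y$ are themselves connected in $K$ via the $v$-walk, so $K$ is connected.

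The main thing to notice is that failure of the first bullet, together with the standing hypothesis that $H$ is connected and has no leaves, pins $H$ down to an actual cycle; once that is observed, the $v$- and $w$-lifts are rigidly determined away from $b$, and the ``swap'' at $b$ is precisely what glues the two lifts into a single connected subgraph of $G$. No counting or estimates are required, just a tracing argument around the cycle.
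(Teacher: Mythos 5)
Your proof is correct and follows essentially the same route as the paper's: failure of the first bullet (together with $H$ connected and leafless) forces $H$ to be a cycle, failure of the second bullet pins down a unique ``swap'' vertex $b$, and tracing the $v$-lift and the $w$-lift around the cycle and gluing them at $b$ (where the lifts cross) yields a single walk through every vertex of $K$. The paper simply writes out the resulting path explicitly after relabeling the cycle so that $a_1=a_{2t}$ is the swap vertex, whereas you phrase the conclusion as two walks joined at shared endpoints, but these are the same argument.
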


\begin{proof}
We assumed that $H$ is connected and has no leaves.  Suppose the first
condition above does not hold (i.e., $H$ has as many vertices as edges).
Then $H$ must be a cycle.  Now suppose that the second condition above
also does not hold, i.e., there is exactly one vertex of $H$ where $\vec{T}$
does not satisfy Condition~1 (and therefore satisfies Condition~3).
We will assume that the edges of $H$ going around the cycle in order
are $\vect{a_1 a_2}, \vect{a_3 a_4}, \dots, \vect{a_{2t-1} a_{2t}}$.
Note that there is no loss of generality in this assumption, because
edge-directions are irrelevant to this lemma.  We can also assume that
the vertex $a_1 = a_{2t}$
is the vertex where Condition~3 is satisfied, but not Condition~1.
Then $v_1 = w_{2t}$, and $w_1 = v_{2t}$.  Since Condition~1 holds
everywhere else, we have $v_{2i} = v_{2i+1}$ and $w_{2i} = w_{2i+1}$
for all $1 \leq i < t$.  Thus $\overline{v_1 v_2}, \overline{v_3 v_4},
\dots, \overline{v_{2t-1} v_{2t}}, \overline{w_1 w_2}, \overline{w_3 w_4},
\dots, \overline{w_{2t-1} w_{2t}}$ is a path that visits every vertex
of $K$, so $K$ is connected.
\end{proof}

\begin{theorem}
\label{thm:count13}
Suppose that the maximum degree $\Delta$ of any vertex in $G$ is
at most $m^{1/2 - \alpha}$, where $\alpha > 0$, and assume
$C \leq \min(m^{2\alpha},m^{1/3})$.  Then the expected number
of distinctly color-compatible $2k$-tuples of edges of $G$
that satisfy either Condition~1 or~3 at every vertex of $H$,
but do not satisfy Condition~1 at every vertex of $H$, is
$O(m^k/C^{2k-t})$.
\end{theorem}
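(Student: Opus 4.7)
The plan is to mimic the proof of Theorem~\ref{thm:count123}, but split into the three bullets of Lemma~\ref{lem:3cases}. Fix one of the $O(1)$ possible isomorphism classes of $K$ and assume $\vec{T}$ is distinctly colorable; by Lemma~\ref{lem:2comps} $K$ has one or two components. Let $t_3 \geq 1$ denote the number of vertices of $H$ at which Condition~3 holds but Condition~1 fails; at each such vertex the two distinct lifts above it must agree in color, giving a factor of $1/C$, so the distinct color-compatibility probability is at most $1/C^{t_3}$.

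I then count vertex selections component by component, exactly as in the proof of Theorem~\ref{thm:count123}: a component $J$ of $K$ contributes at most $m \cdot \Delta^{|\calV(J)|-2}$. Since at most two distinct vertices of $K$ lie above each vertex of $H$ (because only Conditions~1, 2, or~3 can hold at a given $b$), we have $|\calV(K)| \leq 2t$. The connected case (Case~C of Lemma~\ref{lem:3cases}) thus costs at most $m \cdot \Delta^{2t-2}$, and the two-component case costs at most $m^{2} \cdot \Delta^{2t-4}$.

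Next I apply Lemma~\ref{lem:mCDelta} in the form $\Delta^{2} \leq m/C$. In the connected case the expected count is at most $m \cdot (m/C)^{t-1}/C = m^{t}/C^{t}$, and $C \leq m^{1/3}$ yields $C^{2} \leq m$, hence $C^{2(k-t)} \leq m^{k-t}$ and $m^{t}/C^{t} \leq m^{k}/C^{2k-t}$ as desired. In the two-component case the expected count is at most $m^{2} \cdot (m/C)^{t-2}/C^{t_3} = m^{t}/C^{t-2+t_3}$; if $t_3 \geq 2$ (Case~B of Lemma~\ref{lem:3cases}) this matches the connected bound, while if $t_3 = 1$ then Lemma~\ref{lem:3cases} forces Case~A ($k > t$), and $C \leq m^{1/3}$ together with $k - t \geq 1$ gives $C^{2(k-t)+1} \leq C^{3(k-t)} \leq m^{k-t}$, closing the bound.

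The main obstacle is the two-component subcase with a single Condition~3 vertex and $k > t$: here the color matching alone is short by one power of $C$, and the deficit must be absorbed using $k - t \geq 1$ together with the strict bound $C \leq m^{1/3}$. This is essentially the only place in the argument where the $\min$ in the definition $C = \min(m^{1/3}, m^{2\alpha})$ is genuinely used; the weaker bound $C \leq m^{2\alpha}$ alone would not suffice for large $\alpha$.
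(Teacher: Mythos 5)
Your proposal is correct and follows essentially the same approach as the paper: both count vertex selections component-by-component via $m\Delta^{|\calV(J)|-2}$, bound the color-matching probability by $1/C$ per Condition-3-only vertex (using Lemma~\ref{lem:distinct_cond23} for independence), and close the gap via the three cases of Lemma~\ref{lem:3cases}. The only minor difference is in the two-component, $t_3=1$, $k>t$ subcase: you use $\Delta^2\le m/C$ throughout and then invoke $C^3\le m$ to absorb the deficit, whereas the paper substitutes $\Delta\le m/C^2$ for two of the $\Delta$-factors and finishes with $C^2\le m$; both routes rest on the same Lemma~\ref{lem:mCDelta} hypotheses and yield the same bound.
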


\begin{proof}
There are again $O(1)$ possibilities for the isomorphism class of
$K$ (i.e., which of the vertices $v_1,\dots,v_{2k},w_1,\dots,w_{2k}$
are the same), so it suffices to prove the theorem for an arbitary
isomorphism class.  Assume then that we are given the isomporphism
class of $K$.  We can assume that it is distinctly colorable.

We first count the number of ways to select the vertices of $K$.
By Lemma~\ref{lem:2comps}, $K$ has at most two components.  As in
the proof of Theorem~\ref{thm:count123}, in each component, we can
arbitrarily designate any one edge to be the ``first edge.''  There
are at most $m$ ways to select its two endpoints (since $\vec{G}$ has
$m$ edges), and there are at most $\Delta$ ways to select each
subsequent vertex in the component.  Equivalently, we can arbitrarily
designate any two (not necessarily adjacent) vertices of the component
to be the ``first two vertices;'' we can then pretend that there are
at most $\sqrt{m}$ ways to select each of these two vertices and at
most $\Delta$ ways to select each subsequent vertex.  Since $K$ has
at most $2t$ vertices (because at most two vertices lie above each vertex
of $H$), there is a total of at most $m \Delta^{2t-2}$ possibilities
for the vertices of $K$ if $K$ has one component, and $m^2 \Delta^{2t-4}$
possibilities if $K$ has two.  Note that the number is greater in
the two-component case.

Once the vertices of $K$ are chosen, colors must be assigned in
such a way that the $2k$-tuple of edges is distinctly color-compatible.
Consider any vertex~$b$ of $H$ where Condition~3 (but not Condition~1)
holds.  That means that exactly two vertices $x$ and $y$ lie above
$b$ in $K$, and they must be distinct (or else Condition~1 would
hold).  Color-compatibility requires that $x$ and $y$ be assigned
the same color, which happens with probability $1/C$.  Furthermore,
if there are two vertices $b_i$ and $b_j$ where Condition~3 (but
not Condition~1) holds, and if $x_i$, $y_i$, $x_j$, and $y_j$ are
the corresponding vertices that lie above $b_i$ and $b_j$, then
by Lemma~\ref{lem:distinct_cond23}, $x_i$, $y_i$, $x_j$, and $y_j$
are all distinct.  Thus every vertex where Condition~3 (but not
Condition~1) holds contributes an independent factor of $1/C$
to the count.

Suppose now that $H$ has more edges than vertices (i.e., $k-t \geq 1$).
There are at most $m^2 \Delta^{2t-4}$ ways to select the vertices
of $K$ and a probability of at most $1/C$ that the result is
distinctly color-compatible, so (using Lemma~\ref{lem:mCDelta})
the expected number of $2k$-tuples of edges is at most
$$
\frac{m^2\Delta^{2t-4}}{C}  =   \frac{m^2 \Delta^2 \Delta^{2t-6}}{C} 
             \leq  \frac{m^2}{C} \left(\frac{m}{C^2}\right)^2
                                  \left(\frac{m}{C}\right)^{t-3} 
               =   \frac{m^{t+1}}{C^{t+2}} 
             \leq  \frac{m^{t+(k-t)}}{C^{t+2(k-t)}} = \frac{m^k}{C^{2k-t}}\,.
$$

Next suppose instead that $H$ has at least two vertices where Condition~3
(but not Condition~1) holds.  Then there are at most $m^2 \Delta^{2t-4}$
ways to select the vertices of $K$ and a probability of at most $1/C^2$
that the result is distinctly color-compatible, so the expected number
of $2k$-tuples of edges is at most
$$
\frac{m^2\Delta^{2t-4}}{C^2}  =  \frac{m^2}{C^2} (\Delta^2)^{t-2} 
   \leq  \left(\frac{m^2}{C^2}\right) \left(\frac{m}{C}\right)^{t-2} 
     =   \frac{m^t}{C^t} 
   \leq  \frac{m^{t+(k-t)}}{C^{t+2(k-t)}} = \frac{m^k}{C^{2k-t}}\,.
$$

The only remaining case is where $H$ does not have more edges than vertices
and $H$ has only one vertex where Condition~3 (but not Condition~1) holds.
By Lemma~\ref{lem:3cases}, $K$ is connected, i.e., has only one component.
Then there are at most $m \Delta^{2t-2}$ ways to select the vertices of
$K$ and a probability of at most $1/C$ that the result is distinctly
color-compatible, so the expected number of $2k$-tuples of edges is at most
$$
\frac{m\Delta^{2t-2}}{C}  =  \frac{m}{C} (\Delta^2)^{t-1} 
   \leq  \left(\frac{m}{C}\right) \left(\frac{m}{C}\right)^{t-1} 
     =   \frac{m^t}{C^t}
   \leq  \frac{m^{t+(k-t)}}{C^{t+2(k-t)}} = \frac{m^k}{C^{2k-t}}\,.
$$
\end{proof}

\begin{lemma}
\label{lem:cond1}
If the $2k$-tuple of edges $\vec{T} = (\vec{T_1},\vec{T_2})$ is
distinctly colorable and satisfies Condition~1 at every vertex
of $H$, then $\vec{T_1}$ and $\vec{T_2}$ are each isomorphic to
$\vec{H}$.  
\end{lemma}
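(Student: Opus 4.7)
The plan is to use Condition~1 to extract a well-defined vertex map $\phi \colon \calV(\vec{H}) \to \calV(\vec{G})$ whose image realizes $\vec{T_1}$, and then invoke distinct colorability to promote it to an injection. For each $b \in \calV(\vec{H})$, Condition~1 says that the vertices $\{v_i : i \in \Gamma(b)\}$ all coincide, so define $\phi(b)$ to be this common value; since $i \in \Gamma(a_i)$, we have $v_i = \phi(a_i)$ for every $i$, and hence the $i$-th edge of $\vec{T_1}$ is exactly $\vect{\phi(a_{2i-1})\phi(a_{2i})}$. Thus $\vec{T_1}$ is the image of $\vec{H}$ under the vertex map $\phi$, and $\phi$ induces a graph homomorphism $\vec{H} \to \vec{G}$. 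Define $\psi$ analogously from the $w_i$'s; it induces a homomorphism realizing $\vec{T_2}$.

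Next, I would apply the distinct-colorability hypothesis, which asserts that for distinct $b, c \in \calV(\vec{H})$ and any $i \in \Gamma(b)$, $j \in \Gamma(c)$, the vertices $v_i$ and $v_j$ are distinct. Since $v_i = \phi(b)$ and $v_j = \phi(c)$, this yields $\phi(b) \neq \phi(c)$, i.e., $\phi$ is injective on $\calV(\vec{H})$, and the same argument shows $\psi$ is injective.

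Finally, because $\vec{H}$ is a simple directed graph and $\phi$ is vertex-injective, the $k$ edges $\vect{\phi(a_{2i-1})\phi(a_{2i})}$ of $\vec{T_1}$ are pairwise distinct, and $\phi$ is a graph isomorphism from $\vec{H}$ onto the subgraph of $\vec{G}$ formed by those edges; the conclusion for $\vec{T_2}$ follows symmetrically via $\psi$. The only mild subtlety is notational: Condition~1 is stated in terms of tuples of edges, while the conclusion requires viewing $\vec{T_1}$ and $\vec{T_2}$ as graphs. Once $\phi$ and $\psi$ have been extracted, however, the identification is essentially formal, and I do not expect a genuine obstacle.
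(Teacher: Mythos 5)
Your proposal is correct and follows essentially the same route as the paper: you use Condition~1 to see that the map $b \mapsto v_i$ (for $i \in \Gamma(b)$) is well-defined, and distinct colorability to see that it is injective, then conclude the induced map is an isomorphism. The paper phrases this in terms of the edge map $\vect{v_{2i-1}v_{2i}} \mapsto \vect{a_{2i-1}a_{2i}}$ rather than a vertex map $\phi$, but the content is identical.
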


\begin{proof}
Suppose $b$ and $c$ are (not necessarily distinct) vertices of $H$,
and suppose $i \in \Gamma(b)$ and $j \in \Gamma(c)$.  Since
Condition~1 holds everywhere, if $b=c$, then $v_i=v_j$.
Since $\vec{T}$ is distinctly colorable, if $b \neq c$,
then $v_i \neq v_j$.  In other words, $v_i=v_j$ if and
only if $b=c$.  Then the edge map that sends each $\vect{v_{2i-1} v_{2i}}$
to $\vect{a_{2i-1} a_{2i}}$ induces an isomorphism between
$\vec{T_1}$ and $\vec{H}$.  The proof for $\vec{T_2}$
is analogous.
\end{proof}

\begin{theorem}
\label{thm:var_main}
Suppose $\calG$ is either the group of $r^{\rm th}$ roots of unity (in which
case $d=1$) or the group $\{\pm I, \pm M, \pm M^2, \dots, \pm M^{d-1}\}$.
Suppose that the maximum degree $\Delta$ of any vertex in $G$ is at most
$m^{1/2 - \alpha}$, where $\alpha > 0$, and assume
$C \leq \min(m^{2\alpha},m^{1/3})$.  Then the estimate for $\#H$
given by Theorem~\ref{thm:alg1} has variance that is
$O((\#H)^2 + m^k/(dC^{2k-t}))$.
\end{theorem}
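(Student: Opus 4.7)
The plan is to write
\[
\text{Var}(X) \;=\; \text{Prefactor}^2 \cdot \text{Var}\bigl(\tr(\calS)\bigr),
\qquad \text{Prefactor} = \frac{C^t}{(C)_t \cdot d \cdot \auto(H)},
\]
where $(C)_t = C(C-1)\cdots(C-t+1)$, and to expand $\text{Var}(\tr(\calS))$ as a sum of covariances over ordered pairs $(\vec{T_1},\vec{T_2})$ of distinctly color-compatible $2k$-tuples. By Theorems~\ref{thm:var} and~\ref{thm:var2}, only pairs satisfying Condition~1, 2, or~3 at every vertex of $H$ contribute; I partition these into class~$A$ (Condition~1 at every vertex), class~$B$ (Condition~2 at some vertex and Condition~1 not at every vertex), and class~$C$ (Condition~3 at some vertex, Condition~2 at no vertex, and Condition~1 not at every vertex).

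For classes~$B$ and~$C$, at a vertex $b$ where Condition~2 or~3 holds but Condition~1 does not, neither the $v_i$ nor the $w_i$ for $i \in \Gamma(b)$ can all be equal, so $\vec{T_1}$ and $\vec{T_2}$ each fail to be homomorphic images of $\vec{H}$. By Lemma~\ref{lem:k_edges} and the mean-zero property of $\calG$, $E[\tr(\calQ(\vec{T_i}))] = 0$ for $i = 1,2$, so the covariance of the pair collapses to $E[\tilde{Y}(\vec{T_1})\tilde{Y}(\vec{T_2})]$, where $\tilde{Y}(\vec{T}) = 1_{\vec{T}\text{ distinctly color-compatible}} \cdot \tr(\calQ(\vec{T}))$. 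Theorem~\ref{thm:var2} bounds each such expectation by $d \cdot \Pr[\text{both distinctly color-compatible}]$, and Theorems~\ref{thm:count123} and~\ref{thm:count13} bound the total of those probabilities by $O(m^k/C^{2k-t})$. Multiplying the resulting $O(d\cdot m^k/C^{2k-t})$ contribution by $\text{Prefactor}^2 = O(1/d^2)$ yields $O(m^k/(dC^{2k-t}))$.

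For class~$A$, Lemma~\ref{lem:cond1} forces both $\vec{T_i}$ to be injective homomorphic images of $\vec{H}$, and Theorem~\ref{thm:var2} gives $\tr(\calQ(\vec{T_i})) = d$ deterministically; the covariance therefore reduces to $d^2 \cdot \text{Cov}(1_{E_1}, 1_{E_2})$, where $E_i$ is the event that $\vec{T_i}$ is distinctly color-compatible. When the two copies have disjoint vertex sets, at most $2t \leq 4k$ vertices are involved and the $4k$-wise independence of $\calC$ makes $E_1$ and $E_2$ independent, killing the covariance. For pairs sharing at least one vertex, I use $|\text{Cov}(1_{E_1}, 1_{E_2})| \leq \Pr[E_1] = (C)_t/C^t$ and the crude count $(\#H \cdot \auto(H))^2$; after multiplication by $d^2 \cdot \text{Prefactor}^2 = C^{2t}/((C)_t^2 \auto(H)^2)$, this yields $O((\#H)^2 \cdot C^t/(C)_t) = O((\#H)^2)$, since $C^t/(C)_t = O(1)$ for $C \geq t$.

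Combining these two bounds gives $O\bigl((\#H)^2 + m^k/(dC^{2k-t})\bigr)$, as required. The main technical obstacle is the class-$B$/$C$ reduction: one must verify that at a joint vertex where Condition~2 or~3 but not Condition~1 holds, the individual first moments $E[\tilde{Y}(\vec{T_i})]$ both vanish, so the pair's covariance equals its second moment and the counting theorems of Section~\ref{sec:bound_var} apply directly. The remaining ingredients---$4k$-wise independence of $\calC$ to kill the disjoint-pair covariance in class~$A$, and the estimate $C^t/(C)_t = O(1)$---are straightforward.
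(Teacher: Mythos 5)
Your proposal is correct and follows essentially the same route as the paper: it partitions the contributing pairs $(\vec{T_1},\vec{T_2})$ according to which of Conditions 1--3 hold (using Theorems~\ref{thm:var} and~\ref{thm:var2} to discard the rest), bounds the Condition-1-everywhere pairs via Lemma~\ref{lem:cond1}, and bounds the remaining pairs via Theorems~\ref{thm:count123} and~\ref{thm:count13}. The one stylistic difference is in the Condition-1-everywhere (``class $A$'') case: the paper simply notes there are $O((\#H)^2)$ distinctly colorable such pairs, each contributing at most $d^2$ to $E(\tr(\calS)\tr(\overline{\calS}))$, and absorbs the $-(\#H)^2$ subtraction into the $O((\#H)^2)$ bound; you instead expand $\Var(\tr(\calS))$ into per-pair covariances and use $4k$-wise independence of $\calC$ to kill the disjoint pairs, falling back to the crude $(\#H\cdot\auto(H))^2$ count only for overlapping pairs. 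This is a bit more elaborate but lands on the identical bound. Your class-$B$/$C$ reduction is sound as written: at a vertex where Condition~2 or~3 holds but Condition~1 fails, one can check (as you implicitly do) that neither the $v_i$'s nor the $w_i$'s over that vertex are all equal, so both $\vec{T_1}$ and $\vec{T_2}$ fail to be homomorphic images and $E[\tr(\calQ(\vec{T_i}))]=0$, which is exactly what makes the covariance collapse to the second moment and lets the counting theorems apply.
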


\begin{proof}
The variance is given by
$$
\left(\frac{C^t}{C(C-1)\cdots(C-t+1) \cdot d \cdot \auto(H)}\right)^2
E\left(\tr(\calS) \tr(\overline{\calS}) \right) - (\#H)^2 \, .
$$
As discussed earlier,
$\tr(\calS)\tr(\overline{\calS})$ is a sum of terms of the form
$\tr(\calQ(\vec{T_1})) \tr(\overline{\calQ(\vec{T_2})})$,
where
$\vec{T_1} = \vect{v_1v_2},\dots,\vect{v_{2k-1}v_{2k}}$ and
$\vec{T_2} = \vect{w_1w_2},\dots,\vect{w_{2k-1}w_{2k}}$ are
each distinctly color-compatible.
By Theorems~\ref{thm:var} and~\ref{thm:var2}, such a term
contributes to $E(tr(\calS)\tr(\overline{\calS}))$ only if
$(\vec{T_1},\vec{T_2})$ satisfies Condition~1, 2,
or~3 at every vertex of $H$.

First consider the $\vec{T} = (\vec{T_1},\vec{T_2})$
that satisfy Condition~1 at every vertex of $H$.  
By Lemma~\ref{lem:cond1}, if $\vec{T}$ satisfies Condition~1 at every vertex
of $H$ and is distinctly colorable, then $\vec{T_1}$ and $\vec{T_2}$
are each isomorphic to $\vec{H}$.  The number of such $\vec{T}$ is then
$O((\#H)^2)$.  By Theorems~\ref{thm:var} and~\ref{thm:var2}, each
such $\vec{T}$ contributes $d^2$ to $E(\tr(\calS) \tr(\overline{\calS}))$
and therefore contributes at most
$$\left(\frac{C^t}{C(C-1)\cdots(C-t+1) \cdot \auto(H)}\right)^2$$
to the variance.  This term is $O(1)$, so the contributions of
these $\vec{T}$ to the variance is $O((\#H)^2)$.

Next consider the $\vec{T}$ that satisfy Condition~1, 2, or~3 at every
vertex of $H$, but not always Condition~1.  By Theorems~\ref{thm:count123}
and~\ref{thm:count13}, the number of such $\vec{T}$ that are distinctly
color-compatible is $O(m^k/C^{2k-t})$.  By Theorems~\ref{thm:var}
and~\ref{thm:var2}, each such $\vec{T}$ contributes at most $d$
to $E(\tr(\calS) \tr(\overline{\calS}))$.
Thus these terms contribute $O(m^k/(dC^{2k-t}))$ to the variance.
\end{proof}
\section{Discussion of Algorithm}
\label{sec:discussion}

In this section, we discuss how our version of the algorithm
compares to the original in terms of storage, update time per
edge, and a one-time calculation.

First consider the case where $\calG$ is the group of $r^{\rm th}$
roots of unity.  As we showed in Theorem~\ref{thm:var_main},
the variance of a single instance of our algorithm is
$O((\#H)^2 + m^k/C^{2k-t})$, so the number of instances
needed to attain a variance of $O((\#H)^2)$ is
\begin{equation}
\label{eq:instances}
O(1 + m^k/((\#H)^2 C^{2k-t})) \, .
\end{equation}
Each instance of our algorithm requires $O(C^2)$ storage, so
the storage needed is $O(C^2 + m^k/((\#H)^2 C^{2k-t-2}))$.
Assuming our goal is to minimize storage, if the first term
in this expression is larger than the second, then we want
to choose a smaller value of $C$ to ensure that
$$C^2 \leq m^k/((\#H)^2 C^{2k-t-2}) \, ,$$
i.e.,
$$C \leq (m^k / (\#H)^2)^{1/(2k-t)} \, .$$
Thus, although we proved Theorem~\ref{thm:var_main} for
any $C \leq \min(m^{2\alpha},m^{1/3})$, the best choice of
$C$ is $\min(m^{2\alpha},m^{1/3},(m^k / (\#H)^2)^{1/(2k-t)})$.
In that case, the number of instances of our algorithm that
we need to perform is $O(m^k/((\#H)^2 C^{2k-t}))$, so the update
time per edge is also $O(m^k/((\#H)^2 C^{2k-t}))$, and the storage
is $O(m^k/((\#H)^2 C^{2k-t-2}))$.  We thus save a factor of
roughly $C^{2k-t-2}$ in storage and $C^{2k-t}$ in update time
over the original algorithm.

Of the two terms in~\eqref{eq:instances}, 1 and
$m^k/((\#H)^2 C^{2k-t})$, if the first is larger, then we
are doing $O(1)$ instances of the algorithm, so the update time
per edge is $O(1)$.  If the second is larger, then we can reduce
the update time per edge by instead letting $\calG$ be the group
$\{\pm I, \pm M, \dots, \pm M^{d-1}\}$ and setting
$d = m^k/((\#H)^2 C^{2k-t})$, but performing $1/d$ times as many
instances of the algorithm.  By Theorem~\ref{thm:var_main},
the variance remains $O((\#H)^2)$.  The storage requirement
also does not change, since we do $1/d$ times as many instances
of the algorithm, but each instance requires $d$ times the storage.
However, now we are performing $O(1)$ instances of the algorithm,
so the update time is $O(1)$.

There is one drawback of our version of the algorithm: when
the stream ends, a potentially large calculation is required.
In particular, we must compute
$$
\sum_{\begin{subarray}{c}(c_1,\dots,c_t)\\{\rm distinct}\end{subarray}}
                                        \calS_{(c_1,\dots,c_t)} \, . $$
This could potentially involve $C^t$ work, although for most $H$, we can
use inclusion-exclusion to perform the calculation more efficiently.
For instance, if $H$ is a 4-cycle with vertices 1,2,3,4 and edges $\vect{12},
\vect{23}, \vect{34}, \vect{41}$,  then we can loop through colors $c_1$
and $c_3$ for vertices~1 and~3.  For each such pair of colors, we can
loop through colors $c_2 \notin \{c_1,c_3\}$ for vertex~2, computing
$$ \sum_{c_2 \notin \{c_1,c_3\}} \calZ_1^{c_1,c_2} \calZ_2^{c_2,c_3} \, . $$
Separately, we can loop through colors $c_4 \notin \{c_1,c_3\}$ for
vertex~4, computing
$$ \sum_{c_4 \notin \{c_1,c_3\}} \calZ_3^{c_3,c_4} \calZ_4^{c_4,c_1} \, .  $$
We can multiply those two sums and then subtract the terms where $c_2 = c_4$:
$$ \sum_{c \notin \{c_1,c_3\}} \calZ_1^{c_1,c} \calZ_2^{c,c_3}
                              \calZ_3^{c_3,c} \calZ_4^{c,c_1} \, .$$
We thus do the computation with $C^3$ work rather than $C^4$ work.
In fact, it is possible to do slightly better: each of the three sums
above can be computed for all $c_1$ and $c_3$ by performing a $C \times C$
matrix multiplication, which can be done using less than $C^3$ work.
It would be unusual for this computation to be a significant issue,
but if it is, then we might want to choose a smaller value of~$C$,
in which case we would not realize the full reduction in storage.
\section{Conclusion}

We have described three modifications to the [KMSS]-algorithm:
we define one hash function $\calX_i$ for each half-edge of $H$
rather than one for each vertex of $H$; we assign colors to the
vertices of $G$ and restrict to distinctly color-compatible~$\vec{T}$;
and we allow matrix-valued hash functions as an alternative to
complex-valued hash functions.  The first two modifications
reduce the variance in each instance of the algorithm, and
therefore reduce the number of instances needed.  This in
turn reduces the required storage and update time per edge.
The third modification reduces only the update time per edge.

Suppose that the maximum degree $\Delta$ of any vertex in $G$
is at most $m^{1/2 - \alpha}$, where $\alpha > 0$, and
suppose $C \leq \min(m^{2\alpha},m^{1/3})$.  For the original
[KMSS]-algorithm, both the storage and update time per edge
are $O(m^k/(\#H)^2)$.  For our algorithm, we have shown that
the update time per edge is $O(1)$, and the storage is
$O(C^2 + m^k/(C^{2k-t-2}(\#H)^2))$, i.e., the storage
has been reduced approximately by a factor of $C^{2k-t-2}$.

\end{document}